\pdfoutput=1

\documentclass{article}

\usepackage[letterpaper,margin=1in]{geometry}

\usepackage{balance} 
\usepackage{amsmath,amsthm}
\usepackage{amssymb}
\usepackage{booktabs}
\usepackage{tabularx}
\usepackage{array}
\usepackage{enumitem}
\usepackage{graphicx}
\usepackage{color}
\usepackage{xspace}
\usepackage[numbers]{natbib}
\usepackage[hidelinks]{hyperref}
\usepackage{verbatim}

\usepackage[linesnumbered,ruled,vlined]{algorithm2e}
\usepackage{multirow}

\newcommand{\Description}[1]{} 

\title{Bribery's Influence on Ranked Aggregation}
\author{%
  Pallavi Jain\\
  Indian Institute of Technology Jodhpur\\
  \texttt{pallavi@iitj.ac.in}
  \and
  Anshul Thakur\\
  Indian Institute of Technology Jodhpur\\
  \texttt{thakur.17@iitj.ac.com}%
}
\date{} 

\newcommand{\BibTeX}{\rm B\kern-.05em{\sc i\kern-.025em b}\kern-.08em\TeX}

\newtheorem{theorem}{Theorem}
\newtheorem{lemma}[theorem]{Lemma}
\newtheorem{corollary}[theorem]{Corollary}

\newtheorem{obs}{Observation}

\newtheorem{clm}[theorem]{Claim}
\newtheorem{example}{Example}
\newcommand{\defproblem}[3]{%
  \begin{center}
  \fbox{\begin{minipage}{0.97\linewidth}
    \textbf{#1}\par\smallskip
    \textbf{Input:} #2\par
    \textbf{Question:} #3
  \end{minipage}}%
  \end{center}
}

\newcommand{\no}{{\sc No}\xspace}
\newcommand{\yes}{{\sc Yes}\xspace}
\newcommand{\opt}{{\mathtt{OPT}\xspace}\xspace}

\newcommand{\nph}{\textsf{\textup{NP-hard}}\xspace}

\newcommand{\dollarbribery}{\textsc{$\$$-{\sc Bribery}-{\sc Kemeny Score}}\xspace}
\newcommand{\swapbribery}{{\sc Swap Bribery-{Kemeny Score}}\xspace}

\newcommand{\possiblewinners}{\textsc{Possible  Kemeny Score}\xspace}
\newcommand{\voterdeletion}{{\sc Ranking Deletion-Kemeny Score}\xspace}
\newcommand{\candidatedeletion}{{\sc Candidate Deletion-Kemeny Score}\xspace}
\newcommand{\kemenyscore}{\textsc{Kemeny Score}\xspace}
\newcommand{\kemenyconsensus}{\textsc{Kemeny Consensus}\xspace}
\newcommand{\kemenywinner}{\textsc{Kemeny Winner}\xspace}
\newcommand{\mwcis}{\textsc{HCIS}\xspace}

\newcommand{\M}{\mathfrak{B}\xspace}
\newcommand{\pname}
{\textsc{$\M$-Kemeny Score}\xspace}

\newcommand{\possibleks}
{\textsc{Possible Kemeny Score}\xspace}

\DeclareMathOperator{\cost}{\mathtt{cost}\xspace}

\newcommand{\budget}{\mathtt{bgt}\xspace}
\newcommand{\kt}{{\mathtt{KT}}\xspace}
\newcommand{\R}{\Co{R}\xspace}
\newcommand{\X}{X \xspace}
\newcommand{\C}{\Co{C}\xspace}
\newcommand{\instanceI}{\mathcal{I}}
\newcommand{\instanceJ}{\mathcal{J}}

\newcommand{\Co}[1]{\ensuremath{\mathcal{#1}}\xspace}

\newcommand{\hide}[1]{}
\newcommand{\dist}[2]{\kt(#1, #2)}

\newcommand{\disagree}[3]{\ensuremath{\#\mathsf{disagree}(#1,#2,#3)}}
\newcommand{\insertop}[3]{\ensuremath{\mathsf{insert}(#1, #2, #3)}}

\newcommand{\tmpR}[0]{A\xspace}

\DeclareMathOperator*{\argmin}{arg\!min}

\begin{document}



\maketitle

\begin{abstract}
Kemeny Consensus is a well-known rank aggregation method in social choice theory. In this method, given a set of rankings, the goal is to find a ranking $\Pi$ that minimizes the total Kendall tau distance to the input rankings. Computing a Kemeny consensus is {\sf NP}-hard, and even verifying whether a given ranking is a Kemeny consensus is {\sf coNP}-complete. Fitzsimmons and Hemaspaandra [IJCAI 2021] established the computational intractability of achieving a desired consensus through manipulative actions. Kemeny Consensus is an optimisation problem related to Kemeny's rule. In this paper, we consider a decision problem related to Kemeny's rule, known as \emph{Kemeny Score}, in which the goal is to decide whether there exists a ranking $\Pi$ whose total Kendall tau distance from the given rankings is at most $k$. Computation of Kemeny score is known to be {\sf NP}-complete. In this paper, we investigate the impact of several manipulation actions on the Kemeny Score problem, in which given a set of rankings, an integer $k$, and a ranking $X$, the question is to decide whether it is possible to manipulate the given rankings so that the total Kendall tau distance of \X from the manipulated rankings is at most $k$. We show that this problem can be solved in polynomial time for various manipulation actions. Interestingly, these same manipulation actions are known to be computationally hard for Kemeny consensus.
\end{abstract}



\section{Introduction}
Rank aggregation, the task of combining multiple rankings into a single consensus ranking, is a ubiquitous problem with wide-ranging applications. It is a fundamental problem in computational social choice theory and plays a crucial role in various domains such as meta-search engines~\cite{DBLP:conf/www/DworkKNS01}, recommendation systems, machine learning~\cite{balchanowski2023comparative}, marketing and advertisement, bioinformatics~\cite{li2019comparative}, etc. One of the rank aggregation techniques, widely accepted in social choice theory, is \emph{Kemeny's} rule defined by Kemeny in 1959~\cite{kemeny1959mathematics}. In the \kemenyscore problem, given a set of $n$ rankings, say $\R$, and an integer $k$, the goal is to determine if there exists a ranking $\X$ such that the total \emph{Kendall tau distance} (defined later) between the set of rankings in $\R$ and $\X$ is at most $k$~\cite{brandt2016handbook,DBLP:journals/tcs/HemaspaandraSV05}.   Kemeny's rule is interesting in social choice theory as it satisfies Condorcet extension criterion~\cite{truchon1998extension} and is also maximum likelihood estimator for a simple probabilistic model in which individual preferences are noisy estimates of an underlying ``true'' ranking~\cite[page 87]{brandt2016handbook}. \kemenyscore is well-studied computationally from theoretical computer science viewpoint~\cite{DBLP:journals/tcs/HemaspaandraSV05,DBLP:conf/caldam/DeMDM24,DBLP:journals/aamas/BetzlerBN14,DBLP:journals/tcs/BetzlerFGNR09,DBLP:conf/atal/BetzlerFGNR09,DBLP:journals/dm/BiedlBD09,DBLP:conf/stoc/Kenyon-MathieuS07,DBLP:conf/aaai/FominLRS10}, machine learning~\cite{DBLP:conf/aaai/GeorgeD24}, as well as heuristically~\cite{DBLP:journals/eor/RicoVD23,DBLP:journals/jaihc/RicoVD24,DBLP:conf/aaai/DavenportK04}. Interestingly, \kemenyscore is polynomial-time solvable for two rankings and \nph for four rankings~\cite{DBLP:journals/dm/BiedlBD09}, while the case of three rankings is a long-standing open problem.

One of the important directions of research in computational social choice theory is to study the computational complexity of changing the outcome via manipulation, bribery, and various control actions, such as adding or deleting candidates or rankings~\cite{brandt2016handbook,DBLP:journals/tcs/AnandD21,DBLP:journals/tcs/DeyMN18,DBLP:conf/atal/KusekBF0K23,DBLP:journals/teco/KnopKM20,DBLP:journals/aamas/FaliszewskiRRS15,DBLP:journals/jair/FaliszewskiHH09}. To the best of our knowledge, de Haan~\cite{DBLP:conf/atal/Haan17} initiated the study of manipulation for Kemney in the context of judgement aggregation. Recently, Fitzsimmons and Hemaspaandra~\cite{DBLP:conf/ijcai/FitzsimmonsH21} studied the computational complexity of manipulation schemes for the \kemenyconsensus problem, where given the set of $n$ rankings, $\R$, and a ranking $\X$, the goal is to manipulate the rankings $\R$ using a manipulative action so that $\X$ attains the minimum total Kendall tau distance to the profile among all rankings. The Kendall tau distance between two rankings $\pi$ and $\pi'$ over a candidate set $\C$ is the number of pairs of candidates whose relative ordering differs between $\pi$ and $\pi'$. Mathematically, 

\[\kt(\pi,\pi')=|\{\{x,y\}\subseteq \C \colon (x \prec_\pi y  \wedge  y \prec_{\pi'} x) \vee (y \prec_\pi x \wedge x \prec_{\pi'} y)\}|\]

The notation $x \prec_\pi y$ denotes that $x$ is before $y$ in the ranking $\pi$.
The Kendall tau distance of a ranking $\pi$ from the set of rankings $\R$ is the summation of distance from $\pi$ to each ranking in $\R$, i.e., $\sum_{\pi' \in \R}\kt(\pi,\pi')$. 

In this paper, we explore the computational complexity of various manipulation schemes for \kemenyscore, which has not been studied so far to the best of our knowledge.

\paragraph{\bf \kemenyconsensus vs \kemenyscore}\kemenyscore is a decision problem of Kemeny's rule and  \kemenyconsensus is an optimisation problem of Kemeny's rule~\cite[page 88]{brandt2016handbook}.  However, interestingly, when manipulation is considered, there is a significant difference between the two problems. In the manipulation problems for \textsc{Kemeny Score}\xspace, the Kendall tau distance of $\X$ should be at most $k$ from $\Co{R}$. At the same time, in manipulation problems for \textsc{Kemeny Consensus}\xspace it is possible that the distance of $\X$ from $\Co{R}$ does not change at all, rather the distance of other rankings increases so that $\X$ is the closest one. We explain it using the following example, which is the same as Example 9 in~\cite{DBLP:conf/ijcai/FitzsimmonsH21}.
\begin{example}
    Consider three candidates $a,b,c$. Suppose there are three rankings $R_1=R_2=R_3= a \succ b \succ c$, two rankings $R_4=R_5=a\succ c \succ b$, and one ranking $R_6 = c \succ b \succ a$. Let $X= a\succ c \succ b$. Let $k=3$. Note that if we delete ranking $R_6$, then the Kendall tau distance of $\X$ from $\R=\{R_1,\ldots,R_5\}$ is $3$; while $\X$ is not a Kemeny consensus. The Kemeny consensus ranking is $a \succ b \succ c$. However, if we delete $R_1$, then $\X$ is the Kemeny consensus of $\R=\{R_2,\ldots,R_6\}$ while the Kendall tau distance of $\X$ from $\R$ is $4$.   
\end{example}
    
 While many problems, including the \kemenyconsensus and the {\sc Kemeny Winner} (the top candidate in the Kemeny consensus is the winner), are not susceptible to various bribery-based manipulation schemes, we have developed polynomial-time algorithms for the {\sc Kemeny Score} under various bribery-based manipulative actions. 
 Formally, we study the following problem.



\defproblem{\pname}{a set of candidates $\C$, a set of rankings ${\R}$ over $\C$, a ranking $\X$ over $\C$, a cost function $\cost_{\M}$ (depending on the bribery action $\M$), a budget $\budget \in \mathbb{N}$, and an integer $k$.}{Does there exist a set of rankings ${\R}'$ that can be obtained by manipulating rankings in ${\R}$ using bribery action $\M$ within the budget $\budget$, and the total Kendall tau distance of $\X$ from the rankings in ${\R}'$ is at most $k$?}

We consider the following bribery actions. The cost of bribery is additive for all bribery actions.

\begin{description}
    \item[$\$$Bribery.] In this manipulation action, a ranking $R\in \R$ can be changed to any arbitrary ranking. Note that for our problem, changing them to $\X$ is a better choice rather than anything else, as this makes the distance $0$. The cost function is defined as follows: $\cost_{\$}\colon \R \rightarrow \mathbb{N}$. Let $S\subseteq \R$ be the set of rankings that are changed to $\X$. Then, the cost of manipulation is $\sum_{R\in S}\cost_{\$}(R)$.    
    \item[Swap Bribery.] A swap $(c,c')_R$ in the ranking $R$ is called \emph{admissible} if $c$ and $c'$ are consecutive in $R$. In this manipulation action, we can change a ranking by admissible swaps, and the cost depends on the ranking and the number of swaps. Formally, the cost function is defined as follows: $\cost_{\sf SB}\colon \R  \rightarrow \mathbb{N}$. Note that $\cost_{\sf SB}(R)$ denotes the cost of one swap in the ranking $R$. If $l_1$ swaps are done on ranking $R_1$ and $l_2$ swaps are done on ranking $R_2$ then the total cost would be $l_1 \cdot \cost_{\sf SB}(R_1) + l_2 \cdot \cost_{\sf SB}(R_2)$
    \item[Ranking Deletion.] In this manipulation action, we are allowed to delete rankings, and there is a cost associated with the deletion. The cost function is defined as follows: $\cost_{\sf RD}\colon \R \rightarrow \mathbb{N}$.  
    \item[Candidate Deletion.] In this manipulation action, we are allowed to delete candidates. Note that if we delete a candidate $c \in \C$, then it is deleted from all the rankings in $\R$ as well as from $\X$, and it does not change the ordering of other candidates. The cost function is defined as follows: $\cost_{\sf CD}\colon \C \rightarrow \mathbb{N}$.
\end{description}

\emph{In the technical sections, we omit the subscript below $\cost$ since each section is dedicated to a specific bribery action. Thus, it will be clear from the context.}

Candidate deletion and ranking deletion are usually considered as control actions. Since we are considering the cost, we consider these manipulative actions also as bribery actions for brevity. 

Note that the addition of rankings or candidates is not a meaningful manipulation action for us. This is because adding rankings does not decrease the distance of $\X$ from the existing rankings. Additionally,  since $\X$ is given to us, the addition of candidates is not a meaningful operation for us.  

We also study the problem when the rankings in $\R$ are not complete, i.e., all the rankings in $\R$ are partial; however, $\X$ is a ranking over $\C$. The objective is to complete the rankings in $\R$ so that $\X$ is at most $k$ distance away from the new complete rankings over $\C$. Formally, we study the following problem. 

\defproblem{\possibleks}{a set of candidates $\C$, a set of partial rankings ${\R}$, a ranking $\X$ over $\C$, and an integer $k$.}{Can we \emph{extend}  rankings in $\R$ using the remaining candidates so that the total Kendall tau distance of $\X$ from the extended rankings in ${\R}$ is at most $k$?}

By \emph{extending} a ranking $R$, we mean that the candidates of $\C\setminus C'$ are added to $R$, where $R$ is over $C'$; however, the relative ordering of the candidates of the set $C'$ in the ranking $R$ does not change. 


Table~\ref{tab:summary} summarizes the complexity of various manipulation schemes for \kemenyscore, \kemenyconsensus, and {\sc Kemeny Winner}. All the results pertaining to \kemenyscore are build in this paper.

\begin{table}[!t]
  \centering
  \small
  \setlength{\tabcolsep}{4pt}
  \renewcommand{\arraystretch}{1.15}
  \begin{tabularx}{\linewidth}{|>{\raggedright\arraybackslash}p{0.22\linewidth}|>{\centering\arraybackslash}X|>{\centering\arraybackslash}X|>{\centering\arraybackslash}X|}
    \hline
    & \kemenyscore & \kemenyconsensus & {\sc Kemeny Winner} \\
    \hline
    {\bf $\$$Bribery} & P [Theorem~\ref{thm:dollar-bribery}] & {\sf coNP}-complete~\cite[Observation 6]{DBLP:conf/ijcai/FitzsimmonsH21} & OPEN \\
    \hline
    {\bf Swap Bribery} & P [Theorem~\ref{thm:swap-bribery}] & OPEN & OPEN \\
    \hline
    {\bf Ranking Deletion} & P [Theorem~\ref{thm:ranking-deletion}] & $\Sigma_2^p$-complete~\cite[Conjecture on page 199]{DBLP:conf/ijcai/FitzsimmonsH21} & OPEN \\
    \hline
    \multirow{3}{=}{\bf Candidate Deletion} & $k=0$: P [Theorem~\ref{thm:cand_del_polyk0}] & \multirow{3}{=}{$\Sigma_2^p$-complete~\cite[Theorem 8]{DBLP:conf/ijcai/FitzsimmonsH21}} & \multirow{3}{=}{$\Sigma_2^p$-complete~\cite[Theorem 5]{DBLP:conf/aaai/FitzsimmonsHHN19}} \\
     & $|\R|=1$: P [Theorem~\ref{thm:cand_del_oneR}] & & \\
     & OPEN (in general) & & \\
    \hline
    {\bf Possible Kemeny Score} & P [Theorem~\ref{thm:possible-kemeny-score}] & OPEN & OPEN \\
    \hline
  \end{tabularx}
  \caption{Summary of computational complexity of various manipulation schemes for \kemenyscore, \kemenyconsensus, and \kemenywinner. The computational complexity of the cases marked with ``-'' are not known to the best of our knowledge and also mentioned in~\cite{DBLP:journals/teco/KnopKM20}.}
  \label{tab:summary}
\end{table}

\paragraph{\bf Related Work.} Here, we discuss some more related work. In 1992, Bartholdi et al.~\cite{bartholdi1989voting} initiated the study of manipulating the election by adding/deleting the candidates or voters so that a favorite candidate is the winner. Afterwards, there is a lot of study on the manipulation actions for the several voting rules~\cite{brandt2016handbook,DBLP:journals/ai/BredereckFKNST21,DBLP:journals/ai/FaliszewskiMS21,DBLP:journals/toct/BredereckFNT21,DBLP:journals/jair/ChenFNT17,DBLP:conf/aldt/FitzsimmonsH15,DBLP:conf/atal/KariaMD22,DBLP:journals/tcs/AnandD21,DBLP:journals/tcs/DeyMN18,DBLP:journals/iandc/BredereckCFNN16,DBLP:conf/sagt/ElkindFS09,DBLP:journals/algorithmica/DornS12}, with several results pertaining to {\sf NP}-hardness as well as algorithms.  

Manipulation is closely linked to the robustness of voting rules, a connection identified a few years ago~\cite{DBLP:conf/sagt/BredereckFKNST17}. This insight has since inspired research into manipulation from a new perspective~\cite{DBLP:conf/aldt/GawronF19,DBLP:conf/ijcai/BoehmerBFN21,DBLP:conf/sofsem/MisraS19,faliszewski2022robustness,caragiannis2022evaluating}. This relation also highlights the importance of algorithmic results for these problems.
\paragraph{Road Map.} We begin our technical sections from Section~\ref{sec:pks}, which is dedicated to a polynomial time algorithm for \possiblewinners. Then, in Section~\ref{sec:dollar}, we present results for \dollarbribery and \voterdeletion. Section~\ref{sec:swap} and Section~\ref{sec:cd} are dedicated to \swapbribery and \candidatedeletion, respectively. Section~\ref{sec:conclusion} concludes the paper with some open questions.

\section{Preliminaries}
The notation $[n]$ denotes the set of natural numbers $\{1,\ldots,n\}$. The Kendall tau distance between two rankings $\pi,\pi'$ is denoted as $\dist{\pi}{\pi'}$. Furthermore, we abuse the notations and use $\dist{\pi}{\R}$ to denote the Kendall tau distance of ranking $\pi$ from the set of rankings $\R$. For a ranking $\pi$, let  $S(\pi)$ denote the set of elements that are ranked in $\pi$, i.e., $\pi$ is a ranking over $S(\pi)$. We also use $\dist{\pi}{\pi'}$ to denote the distance between $\pi$ and $\pi'$ when either $S(\pi)\subseteq S(\pi')$ or $S(\pi')\subseteq S(\pi)$. The distance is computed only for the elements in $S(\pi)\cap S(\pi')$.  It will be clear from the context. When we say that a ranking $\pi$ is $k$ distance away from another ranking $\pi'$ (or set of rankings $\R$), we mean that $\dist{\pi}{\pi'}\leq k$ (or $\dist{\pi}{\R}\leq k$). A pair of candidates $c,c'$ disagree in $\pi$ and $\pi'$ if their relative ordering differs in $\pi$ and $\pi'$. 
Let $\pi,\pi'$ be two rankings and $S\subseteq S(\pi)\cap S(\pi')$. We say that $\pi$ and $\pi'$ \emph{agrees} over $S$ if for every $\{c,c'\}\subseteq S$, the relative ordering of $c$ and $c'$ is same in both $\pi$ and $\pi'$, i.e., either $c$ is before $c'$ in $\pi$ and $\pi'$ or either $c'$ is before $c$ in both the rankings. Let $\pi$ be a ranking. Let $Y\subseteq S(\pi)$. Then, $\pi\vert_{Y}$ denotes the ordering $\pi$ \emph{restricted} to $Y$, i.e., it is an ordering over $Y$ and the relative ordering of every pair of elements in $Y$ is the same as that in $\pi$. We denote the $i$th element of $\pi$ as $\pi(i)$ and the number of elements in $\pi$ as $|\pi|$. The notation $x \prec_{\pi} y$ denotes that $x$ is before $y$ in the ordering $\pi$. The number of \emph{disagreements} with respect to $x$ in $\pi$ and $\pi'$ is defined as follows $\disagree{\pi}{\pi'}{x} = |\{(x,y) \colon (x \succ_\pi y \wedge y \succ_{\pi'} x) \vee (y \succ_\pi x \wedge x \succ_{\pi'} y)\}|$. We treat the set of rankings $\R$ as an ordered tuple $(R_1, \ldots, R_n)$ for convenience in the technical sections. 

For a function $f\colon P\rightarrow Q$ and a subset $S\subseteq P$, $f(S)=\cup_{s\in S}f(s)$. For a graph $G=(V,E)$, $V(G)$ denotes the vertex set of $G$ and $E(G)$ denotes the edge set of $G$. For a subset $S\subseteq V(G)$, $G-S$ denotes the graph on the vertex set $V(G)\setminus S$ and the edge set $\{uv\in E(G)\colon \{u,v\}\subseteq V(G)\setminus S\}$.  


\section{\possiblewinners}\label{sec:pks}
In this section, we design a polynomial-time algorithm for \possiblewinners problem. Recall that in \possiblewinners we need to complete the partial ranking set $\R$ so that the Kendall tau distance between the given ranking $\X$ and the extended ranking set of $\R$, say $\R'$, is at most $k$. Note that there is no cost associated with the extension of rankings. Thus, we first design an algorithm to extend a ranking $R$ over $C'\subseteq \C$ to $R'$  using the candidates in $C'' = \C \setminus C'$ such that $\dist{\X}{R'}$ is least among all the extensions of $R$. We call $R'$ as an \emph{optimal} extension of $R$. We will argue that we can extend all the rankings in $\R$ using this algorithm and return \yes if and only if we are given a yes-instance. 

The algorithm proceeds as follows. We have a ranking $R$ over $C'\subseteq \C$ as an input. Let $C'' = \C \setminus C'$ and $\Pi=\X \vert_{C''}$.
Let $R'=R$. 
For $i\in [|\C''|]$, we insert $\Pi(i)$ after the candidates $\Pi(1), \ldots, \Pi(i-1)$ in $R'$ at the leftmost position that minimizes the Kendall tau distance of $R'$  from $\X$ after inserting $\Pi(i)$. Algorithm~\ref{alg:possible-kemeny-score} describes the procedure formally. By \emph{inserting} a candidate $c$ in the ranking $R'$ at position $j$, we mean that $c$ is inserted between $R'\vert_{C'}(j)$ and $R'\vert_{C'}(j+1)$ and after all the candidates of $C''$ between $R'\vert_{C'}(j)$ and $R'\vert_{C'}(j+1)$ in $R'$. We denote this operation by $\insertop{R'}{c}{j}$. For example, let $C' = \{1, 2, \ldots, 100\}$, $C'' = \{a, b, \ldots, z\}$, and $R = 1 \prec 2 \prec 3 \prec a \prec b \prec c \prec 4 \prec \ldots \prec 100$. Then $\insertop{R}{d}{3}$ will result in ranking $R' = 1 \prec 2 \prec 3 \prec a \prec b \prec c \prec d \prec 4 \prec \ldots \prec 100$. Note that $\insertop{R'}{c}{0}$ means that $c$ is inserted before $R'\vert_{C'}(1)$ and $\insertop{R'}{c}{|C'|}$ means that $c$ is inserted after the last candidate in $R'\vert_{C'}$ in $R'$.

\begin{algorithm}[ht]
\caption{Possible-Kemeny-Score}
\label{alg:possible-kemeny-score}
\KwIn{Ranking $\X$, Partial ranking $R$ over $C' \subseteq \C$}
\KwOut{a ranking $R'$ over $\C$ which is an extension of $R$}
Initialize set $C'' = \C\setminus C'$, ranking $\Pi=\X\vert_{C''}$, and $R'_0=R$ \\
$l = 0$ \\
\For{$i\in [|\C''|]$}{
    $S = \{l,\ldots,|C'|\}$ \\
    $q=\min(\argmin\limits_{j\in S} \dist{\X}{\insertop{R'}{\Pi(i)}{j}})$ \label{step:min_pos} \\
    $R'_i=\insertop{R'_{i-1}}{\Pi(i)}{q}$ \label{step:insert} \\
    $l=q$
    
}
\Return $R'_{|C''|}$
\end{algorithm}

Before proving that $R'$ is an optimal extension of $R$, we first prove the following structural property of all optimal extensions, which is crucial for our algorithm.

\begin{lemma}
\label{lem:pks-relative-order}
Let $\hat{R}$ be an optimal extension of $R$. Then $\hat{R}$ and $\X$ agree over $C''$. 
\end{lemma}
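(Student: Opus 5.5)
The plan is an exchange argument by contradiction. Suppose $\hat{R}$ is an optimal extension of $R$ but does not agree with $\X$ over $C''$. Then there exist $a,b\in C''$ with $a\prec_{\X} b$ but $b\prec_{\hat{R}} a$. Let $\tilde{R}$ be the ranking obtained from $\hat{R}$ by exchanging the positions of $a$ and $b$: $a$ moves to the position $b$ held, and vice versa, and every other candidate stays where it is.

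\emph{Step 1: $\tilde{R}$ is still an extension of $R$.} Since $a,b\notin C'$, the positions of all candidates of $C'$ are unchanged. Hence $\tilde{R}\vert_{C'}=\hat{R}\vert_{C'}=R$.

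\emph{Step 2: $\dist{\X}{\tilde{R}}<\dist{\X}{\hat{R}}$.} I compare the two distances pair by pair.
\begin{itemize}
\item Any pair not containing $a$ or $b$ has the same relative order in $\hat{R}$ and $\tilde{R}$.
\item The same holds for pairs $\{z,a\}$ or $\{z,b\}$ where $z$ lies before both positions or after both positions.
\item The pair $\{a,b\}$ disagrees with $\X$ in $\hat{R}$ and agrees in $\tilde{R}$. This contributes $-1$.
\end{itemize}
It remains to handle a candidate $z$ located strictly between $b$ and $a$ in $\hat{R}$. There we have $b\prec_{\hat{R}} z\prec_{\hat{R}} a$ and $a\prec_{\tilde{R}} z\prec_{\tilde{R}} b$. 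I split on the position of $z$ in $\X$.
\begin{itemize}
\item If $z\prec_{\X} a\prec_{\X} b$: in $\hat{R}$ exactly one of the pairs $\{z,a\},\{z,b\}$ disagrees with $\X$ (namely $\{b,z\}$). In $\tilde{R}$ exactly one also disagrees (namely $\{a,z\}$). The change is $0$.
\item If $a\prec_{\X} b\prec_{\X} z$: symmetrically, the change is $0$.
\item If $a\prec_{\X} z\prec_{\X} b$: both pairs disagree in $\hat{R}$ and both agree in $\tilde{R}$. The change is $-2$.
\end{itemize}
Summing these contributions, $\dist{\X}{\tilde{R}}\le \dist{\X}{\hat{R}}-1$.

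\emph{Step 3: conclusion.} Step 2 contradicts the optimality of $\hat{R}$, so $\hat{R}$ and $\X$ agree over $C''$.

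The only step needing care is the case analysis for candidates $z$ between $b$ and $a$. The key point is to use an exchange of positions rather than moving a single candidate. Moving one candidate could create new disagreements with the intervening candidates of $C'$. The exchange avoids this: each intermediate $z$ contributes a change of $0$ or $-2$, never a positive amount. No choice of a special (for example, adjacent) disagreeing pair is needed, since the argument works for any disagreeing pair.
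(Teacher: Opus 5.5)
Your proposal is correct and follows essentially the same route as the paper: swap the positions of the two disagreeing $C''$ candidates in $\hat{R}$, then do the same three-way case analysis on each intermediate candidate by its position in $X$. Each intermediate candidate changes the distance by $0$ or $-2$, and the pair itself contributes $-1$. Your explicit check that the swapped ranking is still an extension of $R$ (since no candidate of $C'$ moves) is a detail the paper leaves implicit.
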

\begin{proof}
Let $\hat{R}$ be an optimal extension of $R$. Suppose that $\{c,c'\}\subseteq C''$ such that $c \prec_\X c'$. Towards the contradiction, suppose that $c' \prec_{\hat{R}} c$. Let $R^\star$ be an ordering over $\C$ obtained by swapping $c'$ and $c$ in $\hat{R}$. Clearly, $\dist{\X}{\hat{R}} \leq \dist{\X}{R^\star}$ as $\hat{R}$ is an optimal extension. Note that if there is no candidate between $c$ and $c'$ in $\hat{R}$, then $\dist{\X}{R^\star} < \dist{\X}{\hat{R}}$, which is a contradiction. 

Next we consider the case when there exist candidates $c_1, c_2, \ldots, c_n$ between $c$ and $c'$ in $\hat{R}$, i.e., $c' \prec_{\hat{R}} \ldots \prec_{\hat{R}} c_i \prec_{\hat{R}} \ldots \prec_{\hat{R}} c$. Recall that $\{c,c'\}$ contributes to $\dist{X}{\hat{R}}$, but not to $\kt(X,R^\star)$. Let $c_i \in \{c_1, c_2, \ldots, c_n\}$. If $c_i$ is between $c$ and $c'$ in $\X$, i.e., $c \prec_\X c_i \prec_\X c'$, then note that $\{c,c_i\}$ and $\{c',c_i\}$ contribute to $\dist{\X}{\hat{R}}$, but not to $\dist{\X}{R^\star}$. If $c_i$ is after $c'$ in $\X$, i.e., $c \prec_\X c' \prec_\X c_i$, then $\{c,c_i\}$ contributes to $\dist{\X}{\hat{R}}$ and not to $\dist{\X}{R^\star}$, while $\{c',c_i\}$ contributes to $\dist{\X}{R^\star}$ and not to $\dist{\X}{\hat{R}}$. Similarly, if $c_i$ is before $c$ in $\X$, i.e., $c_i \prec_\X c \prec_\X c'$, then $\{c',c_i\}$ contributes to $\dist{\X}{\hat{R}}$ and not to $\dist{\X}{R^\star}$, while $\{c,c_i\}$ contributes to $\dist{\X}{R^\star}$ and not to $\dist{\X}{\hat{R}}$. Since we do not change the position of any other candidates, their relative ordering does not change in $R^\star$. Thus, $\dist{\X}{R^\star} < \dist{\X}{\hat{R}}$, which is a contradiction. 

Thus, every optimal extension of $R$ agrees with $\X$ over $C''$.
\end{proof}

Lemma~\ref{lem:pks-relative-order} provides the intuition for placing $\Pi(i)$ after $\Pi(1), \ldots, \Pi(i-1)$ in Algorithm~\ref{alg:possible-kemeny-score}.


Let $\tmpR$ be a ranking over $\C$, such that $\tmpR$ and $\X$ agree over $C''$. Let $\tmpR_i = \tmpR\vert_{C' \cup \{\Pi(1), \ldots ,\Pi(i)\}}$. Before analysing the correctness of Algorithm \ref{alg:possible-kemeny-score}, let us consider the following scenario. Suppose the candidate $\Pi(i)$ is positioned at $t_i$ in the ranking $\tmpR_i$, and there exists a position $t_f \geq t_i$ such that placing $\Pi(i)$ at position $t_f$ minimises the number of disagreements between $\Pi(i)$ and the candidates of $C'$ as measured between $\tmpR$ and $\X$ compared to any position after $t_i$. If we move $\Pi(i)$ along with all candidates $\ell > i, \ \Pi(\ell) \in C''$ positioned between $t_i$ and $t_f$ in $\tmpR$ to $t_f$, this operation will not increase Kendall tau distance between $\tmpR$ and $\X$ as proved in Lemma~\ref{lem:pks-greedy-placement}. Note that we conserve the relative order between candidates of $C''$ in the aforementioned operation. We illustrate this operation and the importance of Lemma~\ref{lem:pks-greedy-placement} using Figure~\ref{fig:pks}. Note that in the figure, it is clear that moving $\Pi(i)$ from $t_i$ to $t_f$ reduces the number of disagreements involving $\Pi(i)$. However, it is not obvious why moving other candidates of $C''$ in red squares does not increase the total Kendall tau distance. 

\begin{figure}
    \centering
\includegraphics[width=0.5\linewidth]{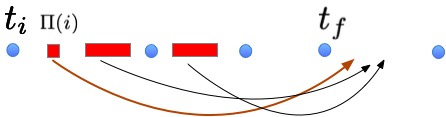}
    \caption{$\Pi(i),t_i$, and $t_f$ are as discussed in above paragraph. Circles represent the candidates of $C'$ and squares denote the candidates of $C''$. Big squares denote a chunk of $C''$ that are consecutive in the ranking.}
    \Description{Schematic figure: circles for $\mathcal{C}''$, squares for $\mathcal{C}'$,
  with one consecutive block of $\mathcal{C}''$ highlighted.}
    \label{fig:pks}
\end{figure}

\begin{lemma}
\label{lem:pks-greedy-placement}
Moving $\Pi(i)$, along with every candidate $\Pi(\ell) \in C''$ for $\ell > i$ that is positioned between $t_i$ and $t_f$ in $\tmpR$, to position $t_f$ does not increase the Kendall tau distance between $\tmpR$ and $\X$.
\end{lemma}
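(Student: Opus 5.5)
The plan is to write the change in $\dist{\X}{\tmpR}$ caused by the move as a sum of per-candidate terms, and to show that each term is non-positive. Let $M$ be the set of moved candidates: $\Pi(i)$ together with every $\Pi(\ell)$, $\ell>i$, lying between $t_i$ and $t_f$. The moved candidates are placed in the gap $t_f$ in the order of $\X$, immediately before any candidates of $C''$ already sitting in that gap.

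First I check which pairs can change their relative order.
\begin{itemize}
\item Pairs inside $C'$ are untouched.
\item Pairs inside $C''$ keep their order. Since $\tmpR$ agrees with $\X$ over $C''$, every $\Pi(j)$ with $j<i$ lies at or before gap $t_i$, and it already precedes $\Pi(i)$ inside gap $t_i$. Every non-moved $\Pi(\ell)$ in gap $t_f$ comes later in $\X$ than all moved candidates preceding it. So the block can be inserted without violating agreement with $\X$ over $C''$.
\item Hence the only pairs whose order changes are $\{y,c\}$ with $y\in M$ and $c\in C'$ lying strictly between the current gap of $y$ and $t_f$.
\end{itemize}
Consequently the total change equals $\sum_{y\in M}\Delta_y$. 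Here $\Delta_y$ is the change in the number of disagreements between $y$ and $C'$ when $y$ alone moves from its gap $j_y\ge t_i$ to $t_f$.

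Next I compute $\Delta_y$ explicitly. When $y$ moves rightward past $c\in C'$, the pair $\{y,c\}$ goes from $y\prec c$ to $c\prec y$. This contributes $+1$ if $y\prec_\X c$ and $-1$ if $c\prec_\X y$. Hence
\[\Delta_y=\sum_{c\in D(j_y)}\bigl([y\prec_\X c]-[c\prec_\X y]\bigr),\]
where $D(j)$ denotes the set of candidates of $C'$ between gap $j$ and gap $t_f$. Write $f_y(j)$ for the number of disagreements of $y$ with $C'$ when $y$ is in gap $j$. Then $\Delta_y=f_y(t_f)-f_y(j_y)$.

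For $y=\Pi(i)$ we have $j_y=t_i$, so $\Delta_{\Pi(i)}\le 0$ directly by the choice of $t_f$. The main step is to handle $y'=\Pi(\ell)$ with $\ell>i$ sitting in some gap $j\in[t_i,t_f]$. The key monotonicity observation is that $\Pi(i)\prec_\X\Pi(\ell)$. Therefore, for every $c\in C'$, the relation $c\prec_\X\Pi(i)$ implies $c\prec_\X\Pi(\ell)$. So each summand for $y'$ over $D(j)$ is at most the corresponding summand for $\Pi(i)$ over the same set, which gives
\[\Delta_{y'}\le f_{\Pi(i)}(t_f)-f_{\Pi(i)}(j).\]
Since $j\ge t_i$, minimality of $t_f$ among gaps $\ge t_i$ gives $f_{\Pi(i)}(t_f)\le f_{\Pi(i)}(j)$, so $\Delta_{y'}\le 0$.

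Summing the non-positive terms $\Delta_y$ over $y\in M$ shows that the distance does not increase. I expect the main obstacle to be the careful bookkeeping in the first step: making precise where the block is inserted within gap $t_f$, so that no $C''$–$C''$ pair flips, and confirming that all $\Pi(j)$ with $j<i$ lie outside the moved range. This is where agreement of $\tmpR$ with $\X$ over $C''$, via the same ordering property as in Lemma~\ref{lem:pks-relative-order}, is used.
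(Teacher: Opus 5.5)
Your proof is correct and follows essentially the same route as the paper. Both arguments compare each moved $\Pi(\ell)$ with $\Pi(i)$ one $C'$ candidate at a time, use $\Pi(i)\prec_X\Pi(\ell)$ to get the pointwise inequality, and finish with the minimality of $t_f$ among gaps $\geq t_i$. The paper phrases the comparison as gap-by-gap steps (Claim~\ref{clm:best_pos}) summed along the way, while you sum over $D(j)$ directly and also state explicitly that no $C''$--$C''$ pair flips, which the paper leaves implicit.
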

\begin{proof}
To streamline our arguments, we begin by defining a sequential operation that moves $\Pi(i)$ and all the candidates of $C''$ after $\Pi(i)$ that are positioned between $t_i$ and $t_f$ to $t_f$ in $\tmpR$. The procedure is as follows, first we move all the candidates of $C''$ after $\Pi(i)$ positioned at $t_i$ to $t_i + 1$, then we move all the candidates of $C''$ positioned at $t_i + 1$ to $t_i + 2$, note that in the second operation, the candidates of $C''$ positioned at $t_i + 1$ include all the candidates moved to $t_i+1$ after the first operation. We continue to do this operation until we move all the candidates $\Pi(\ell) \in C''$, $\ell \geq i$,   positioned between $t_i$ and $t_f$ to $t_f$. We denote the change in the number of disagreements of $c \in C''$ with the candidates in $C'$ when we move a candidate $c$ from its position $x$ to $y$ as $\Delta_{x, y} c$, i.e., $\Delta_{x, y} c$ is the number of disagreements when $c$ is at position $y$ minus the number of disagreements when $c$ is at position $x$. 

\begin{clm}\label{clm:best_pos}
    For every $j\in \{i,\ldots,{f-1}\}$, $\Delta_{t_j,t_j+1} \Pi(\ell)\leq \Delta_{t_j,t_{j+1}}\Pi(i)$, where $\Pi(i)$ and $\Pi(\ell)$ are at position $j$, $\ell>i$. 
\end{clm}

\begin{proof}
    Note that moving $\Pi(i)$ from $t_j$ to $t_{j+1}$, either increases or decreases the number of disagreements of $\Pi(i)$ with the candidates in $C'$ by 1.  
    We consider these two cases separately. 
\begin{description}
    \item[Case 1.] If moving $\Pi(i)$ to $t_j + 1$ increases the number of disagreements by 1 (i.e., $\Delta_{t_j, t_j + 1} \Pi(i) = +1$), this implies that, $\Pi(i) \prec_\X \tmpR\vert_{C'}(t_j + 1)$. In this case, we cannot definitively predict the effect of moving the candidate $\Pi(\ell)$ forward to $t_j + 1$, as $\Pi(i) \prec_\X \Pi(\ell)$. Therefore, $\Delta_{t_j, t_j + 1} \Pi(\ell)$ can be either positive or negative 1 when moved. Hence $\Delta_{t_j, t_j+1} \Pi(\ell) \leq \Delta_{t_j, t_j+1} \Pi(i)$

    \item[Case 2.] If moving $\Pi(i)$ to $t_j + 1$ decreases the number of disagreements by 1 (i.e., $\Delta_{t_j, t_j + 1} \Pi(i) = -1$), this implies that, $\tmpR\vert_{C'}(t_j + 1)\prec_\X \Pi(i)$. Since $\tmpR\vert_{C'}(t_j + 1) \prec_\X \Pi(i) \prec_\X \Pi(\ell)$, moving $\Pi(\ell)$ up by one position in $\tmpR$ will also decrease the number of disagreements by 1 (i.e., $\Delta_{t_j, t_j + 1} \Pi(\ell) = -1$). Hence $\Delta_{t_j, t_j+1} \Pi(\ell) \leq \Delta_{t_j, t_j+1} \Pi(i)$.
\end{description}
\end{proof}

Due to our sequential procedure and Claim~\ref{clm:best_pos}, we know that for every $\Pi(\ell)$, where $\ell>i$, at $t_{i+p}$, where $p\in \{0,\ldots,f-1\}$

\[
\sum_{0\leq j < f - p} \Delta_{t_{i + p + j}, t_{i + p + j + 1}} \Pi(\ell) \leq \sum_{0\leq j < f - p} \Delta_{t_{i + p + j}, t_{i + p + j + 1}} \Pi(i)
\]

Therefore, $\Delta_{t_{i+p},t_f}\Pi(\ell)\leq \Delta_{t_{i+p},t_f}\Pi(i)$. Due to definition of $t_f$, $\Delta_{t_{i+p},t_f}\Pi(\ell)\leq\Delta_{t_{i+p},t_f}\Pi(i)\leq 0$. Thus moving $\Pi(i)$, along with every candidate $\Pi(\ell) \in C''$ for $\ell > i$ that is positioned between $t_i$ and $t_f$ in $\tmpR$, to position $t_f$ does not increase the Kendall tau distance between $\tmpR$ and $\X$.
\end{proof}

Finally, we are ready to argue that the ranking returned by Algorithm \ref{alg:possible-kemeny-score} is an optimal extension of $R$. 

\begin{lemma}
\label{lem:pks-algorithm}
Let $R'$ be the ranking returned by Algorithm \ref{alg:possible-kemeny-score}. Then, $R'$ is an optimal extension of $R$. 
\end{lemma}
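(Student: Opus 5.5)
We prove the statement by an exchange argument. Among all optimal extensions of $R$, fix one, $\hat{R}$, that agrees with the output $R'$ on the positions of the longest prefix $\Pi(1),\ldots,\Pi(i-1)$ of $\Pi=\X\vert_{C''}$. By Lemma~\ref{lem:pks-relative-order}, $\hat{R}$ agrees with $\X$ over $C''$. Hence both $\hat{R}$ and $R'$ are obtained from $R$ by inserting $\Pi(1),\ldots,\Pi(|C''|)$ at weakly increasing positions (gaps of $R$), with ties inside a gap broken by the $\Pi$-order. Both rankings are therefore determined by position vectors $(p_1,\ldots,p_{|C''|})$ and $(q_1,\ldots,q_{|C''|})$ with $p_1\le\cdots\le p_{|C''|}$, where $q_j$ is the value chosen in Step~\ref{step:min_pos}. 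Since pairs inside $C''$ and pairs inside $C'$ contribute the same to $\dist{\X}{\cdot}$ for every such extension, $\dist{\X}{\hat{R}}$ equals a constant plus $\sum_j d_j(p_j)$. Here $d_j(t)$ is the number of disagreements of $\Pi(j)$ with the candidates of $C'$ when $\Pi(j)$ sits in gap $t$. We note that the quantity minimised in Step~\ref{step:min_pos} is exactly $d_i(j)$ plus a constant.

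Suppose $p_i\neq q_i$, where $i$ is the first index of disagreement. We handle the two cases separately.

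\textbf{Case $p_i<q_i$.} The algorithm chose $q_i$ as the leftmost minimiser of $d_i$ over $\{q_{i-1},\ldots,|C'|\}$, and $p_i\ge p_{i-1}=q_{i-1}$, so $d_i(p_i)>d_i(q_i)$ or $p_i$ is not a minimiser. Set $t_i=p_i$, and let $t_f$ be the best position to the right. We apply Lemma~\ref{lem:pks-greedy-placement} to $\tmpR=\hat{R}$, taking the target to be $q_i$; $q_i$ minimises $d_i$ over all gaps $\ge p_i$, so it plays the role of $t_f$. This moves $\Pi(i)$ and every later $\Pi(\ell)$ lying in gaps between $p_i$ and $q_i$ to gap $q_i$ without increasing the distance. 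The new ranking still respects the $\Pi$-order, because later candidates already in gaps $\ge q_i$ stay put. It is therefore an optimal extension agreeing with $R'$ on a longer prefix, which contradicts the choice of $\hat{R}$.

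\textbf{Case $p_i>q_i$.} This is the main obstacle, since Lemma~\ref{lem:pks-greedy-placement} only pushes rightward. Here $d_i(q_i)\le d_i(p_i)$, because $q_i$ minimises over a range containing $p_i$. We move $\Pi(i)$ leftward to $q_i$. This is legal since $q_i\ge q_{i-1}=p_{i-1}$, and it does not increase $d_i$. No later candidate has to move, because later candidates sit in gaps $\ge p_i>q_i$, so the $\Pi$-order is preserved. The new ranking is again optimal with a longer agreeing prefix, which is the same contradiction.

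Thus $\hat{R}=R'$, so $R'$ is optimal. It remains to make precise the claim that within-$C''$ and within-$C'$ pair contributions are invariant. This holds because every candidate extension agrees with $\X$ on $C''$ and with $R$ on $C'$. The leftward case is where the leftmost tie-breaking is not even needed, while the rightward case is exactly where Lemma~\ref{lem:pks-greedy-placement} is used. A subtle point to check is that the contiguous-move argument of that lemma applies to the chunk structure produced by $\insertop{\cdot}{\cdot}{\cdot}$; we verify this by noting that gap-wise insertion after existing $C''$ candidates is the same as ordering by $\Pi$.
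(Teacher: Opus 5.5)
Your proof is correct and follows essentially the paper's argument. Your choice of an optimal extension with a maximal agreeing prefix is just the contrapositive form of the paper's induction on $i$ (which maintains an optimal $R^\star$ with $R^\star\vert_{C'\cup\Pi_i}=R'_i$), with the same two cases: $\Pi(i)$ is moved left directly, or moved right together with the later $C''$ candidates via Lemma~\ref{lem:pks-greedy-placement}. Your explicit decomposition $\dist{\X}{\hat{R}}=\text{const}+\sum_j d_j(p_j)$ and your uniform handling of $i=1$ (with $p_0=q_0=0$) are slightly tidier than the paper's base case, which moves $\Pi(1)$ rightward on its own without invoking that lemma.
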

\begin{proof}
We first claim that $R'$ is an extension of $R$.
Note that $R'\vert_{C'}=R\vert_{C'}$ as we do not change the relative ordering of any pair of candidates of $C'$ in the algorithm, i.e., for all $\{c, c'\} \subseteq C'$, if $c \prec_{R} c'$, then $c \prec_{R'} c'$. Let $\Pi_i = \bigcup_{q=1}^{i} \Pi(q)$ and $R'_i$ be the ranking after $i^{th}$ iteration of Algorithm \ref{alg:possible-kemeny-score}. Next, we argue the optimality in the following claim.
\begin{clm}
\label{clm:penultimate}
For all $i \in [|C''|]$, there exists an optimal extension $R^\star$ of $R$ such that and $R^\star\vert_{C'\cup \Pi_i} = R'_i$
\end{clm}
\begin{proof}
We prove this claim using induction on $i$. For $i = 1$, consider the case where $R^\star\vert_{C'\cup \Pi_1} \neq R'_1$, which means that $t_1 \neq t'_1$ where $t_1$ is the position of $\Pi(1)$ in $R^\star$ and $t'_1$ is the position where $\Pi(1)$ was inserted in $R'_1$. If we move $\Pi(1)$ to the position $t'_1$ in $R^\star$, we note that the number of disagreements between $R^\star$ and $\X$ do not increase as $t'_1$ is the position returned by Algorithm \ref{alg:possible-kemeny-score} such that insertion of $\Pi(1)$ at that position leads to the least number of disagreements between our ranking and $\X$. Let this new ranking be $\tilde{R}$. We can see that $\tilde{R}$ is an optimal extension of $R$ such that $\tilde{R}\vert_{C'\cup \Pi_1} = R'_1$. 


Suppose that Claim \ref{clm:penultimate} holds for $i \leq \ell$. Then there exists an optimal extension $R^\star$ of $R$ such that $R^\star\vert_{C'\cup \Pi_l} = R'_l$. Next we argue for $i = \ell + 1$. Consider the case where $R^\star\vert_{C'\cup \Pi_{\ell+1}} \neq R'_{\ell+1}$, which means that $t_{\ell+1} \neq t'_{\ell+1}$ where $t_{\ell+1}$ is the position of $\Pi(\ell+1)$ in $R^\star$ and $t'_{\ell+1}$ is the position where $\Pi(\ell+1)$ was inserted in $R'_{\ell+1}$. So let us assume the case where $t'_{\ell+1} < t_{\ell+1}$. Note that $t_\ell = t'_\ell \leq t'_{\ell+1}$ as $R^\star\vert_{C'\cup \Pi_l} = R'_\ell$. We can always move $\Pi(\ell+1)$ to $t_{\ell+1}'$ in $R^\star$ and it does not change the relative ordering of candidates of $C''$. 
If we move $\Pi(\ell+1)$ to the position $t'_{\ell+1}$ in $R^\star$, we note that the number of disagreements between $R^\star$ and $\X$ involving $R^\star$ does not increase as $t'_{\ell+1}$ is the position returned by Algorithm \ref{alg:possible-kemeny-score} such that insertion of $\Pi(\ell + 1)$ at that position leads to the least number of disagreements between our ranking and $\X$, note that we are only moving $\Pi(\ell+1)$ in $R^\star$, so the changes in the Kendall tau distance will only be caused due to $\Pi(\ell+1)$. Hence, the Kendall tau distance does not increase. Let this new ranking be $\tilde{R}$, we can see that $\tilde{R}$ is an optimal extension of $R$ such that $\tilde{R}\vert_{C'\cup \Pi_{\ell+1}} = R'_{\ell+1}$. 

For the second case where $t_{\ell+1} < t'_{\ell+1}$, we can say that moving $\Pi(\ell+1)$ and every candidate $\Pi(x), \ x > \ell+1$, positioned between $t_{\ell+1}$ and $t'_{\ell+1}$ in $R^\star$ to $t'_{\ell+1}$ in $R^\star$ does not increase the number of disagreements between $R^\star$ and $\X$ as a direct consequence of Lemma \ref{lem:pks-greedy-placement}. Note that $t'_{\ell+1}$ is position returned by Algorithm \ref{alg:possible-kemeny-score} such that insertion of $\Pi(\ell + 1)$ at that position leads to the least number of disagreements between our ranking and $\X$. Let this new ranking be $\tilde{R}$, we can see that $\tilde{R}$ is an optimal extension of $R$ such that $\tilde{R}\vert_{C'\cup \Pi_{\ell+1}} = R'_{\ell+1}$. 
%
\end{proof}

As Claim \ref{clm:penultimate} holds for all $i \in [|C''|]$, we can say that there exists an optimal extension $R^\star$ of $R$ such that  $R^\star\vert_{C'\cup C'' = \C} = R'$. This proves that $R'$ returned by Algorithm \ref{alg:possible-kemeny-score} is an optimal extension to $R$. 
\end{proof}

Lemma~\ref{lem:pks-algorithm} gives an optimal extension of a ranking in $\R$. We complete the algorithm in the following theorem by extending all the rankings using the same procedure and argue the running time. 

\begin{theorem}\label{thm:possible-kemeny-score}
\possiblewinners can be solved in polynomial time.
\end{theorem}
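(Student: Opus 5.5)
The plan is to reduce the whole problem to extending each partial ranking independently, and then to invoke Lemma~\ref{lem:pks-algorithm} on each one.

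\emph{Decomposition.} The key observation is that the objective decomposes. For any extension $\R'=(R'_1,\ldots,R'_n)$ of $\R=(R_1,\ldots,R_n)$ we have $\dist{\X}{\R'}=\sum_{j=1}^{n}\dist{\X}{R'_j}$. Moreover, the choice of extension of $R_j$ places no constraint on the extension of $R_{j'}$ for $j'\neq j$. Hence the minimum total distance over all extensions of $\R$ equals $\sum_{j}\min_{R'_j}\dist{\X}{R'_j}$, where each minimum ranges over the extensions of $R_j$ alone.

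\emph{Algorithm.} Run Algorithm~\ref{alg:possible-kemeny-score} on each $R_j$ with the ranking $\X$, obtaining $\hat R_j$. Compute $s=\sum_j \dist{\X}{\hat R_j}$ and answer \yes if and only if $s\le k$.

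\emph{Correctness.} By Lemma~\ref{lem:pks-algorithm}, each $\hat R_j$ is an optimal extension of $R_j$. Therefore $s$ equals the minimum achievable total distance. If $s\le k$, then $(\hat R_1,\ldots,\hat R_n)$ is a witness. Conversely, if some extension $\R'$ achieves total distance at most $k$, then by optimality of each $\hat R_j$,
\[
s=\sum_j\dist{\X}{\hat R_j}\le \sum_j \dist{\X}{R'_j}\le k .
\]

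\emph{Running time.} Let $m=|\C|$. For a single ranking, the loop runs $|C''|\le m$ times. In each iteration, line~\ref{step:min_pos} tries at most $|C'|+1\le m+1$ positions. Each position requires building the ranking, in $O(m)$ time, and computing its Kendall tau distance to $\X$, in $O(m^2)$ time naively. So one ranking takes $O(m^4)$ time, all $n$ rankings take $O(nm^4)$, and the final sum and comparison add $O(nm^2)$. The total is polynomial. This could be tightened by evaluating only the disagreements involving $\Pi(i)$, but that is unnecessary.

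There is no real obstacle here: all the substantive work is in Lemma~\ref{lem:pks-algorithm}. The only point to state explicitly is the independence of the per-ranking extensions, which is what justifies summing the individual minima.
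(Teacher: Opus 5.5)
Your proposal is correct and follows the paper's proof: extend each partial ranking independently with Algorithm~\ref{alg:possible-kemeny-score}, use Lemma~\ref{lem:pks-algorithm} for per-ranking optimality, and compare the summed distance with $k$. The only difference is your cruder $\mathcal{O}(nm^4)$ running-time bound; the paper gets $\mathcal{O}(nm^3)$ by counting only the disagreements involving $\Pi(i)$ at line~\ref{step:min_pos}, which does not affect polynomiality.
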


\begin{proof}
Let $\instanceI=(\C,\R,\X,k)$ be an instance of \possiblewinners. Recall that rankings in $\R$ are over $C' \subseteq \C$. Let $\R = (R_1, R_2, \ldots, R_n)$. For every $R_i\in \R$, we find an optimal extension $R'_i$ using Algorithm~\ref{alg:possible-kemeny-score}. Let $\R'=(R'_1, R'_2, \ldots, R'_n)$ be the set of optimal extensions obtained by applying Algorithm \ref{alg:possible-kemeny-score} to each ranking in $\R$. If $\dist{\X}{\R} \leq k$, we return \yes; otherwise \no. Next, we prove the correctness of our algorithm. Clearly, if we return \yes, then $\instanceI$ is a yes-instance of \possiblewinners. We next argue that if $\instanceI$ is a yes-instance of \possiblewinners we return an extension $\R'$ such that $\dist{\X}{\R'} \leq k$. Let us suppose that there exists a set of extended rankings $\R^\star = (R^\star_1, R^\star_2, \ldots ,R^\star_n)$, such that $\dist{\X}{\R^\star} \leq k$. Due to Lemma \ref{lem:pks-algorithm}, we can say that for all $i \in [n]$, $\dist{\X}{\R'_i} \leq \dist{\X}{R^\star_i}$, which implies that, $\dist{\X}{\R'} = \sum_i \dist{\X}{\R'_i} \leq \sum_i \dist{\X}{R^\star_i} = \dist{\X}{\R^\star} \leq k$.

Next, we argue the running time of the algorithm. Note that Algorithm~\ref{alg:possible-kemeny-score} runs in time $\mathcal{O}(m^3)$ and we call Algorithm~\ref{alg:possible-kemeny-score} $n$ times. Thus, the algorithm runs in time $\mathcal{O}(n \cdot m^3)$. Note that in Algorithm~\ref{alg:possible-kemeny-score} at line number $5$, the Kendall tau distance calculation runs in time $\mathcal{O}(m)$ because we only need to compute the number of disagreements between $\Pi(i)$ and the candidates of $C'$.

This completes the proof.
\end{proof}

    In our algorithm, we assume that all the rankings in $\R$ are over the subset $C'\subseteq \C$. However, this is only for ease of explanation, and the algorithm works even when rankings in $\R$ are not over the same subset. This is due to the reason that we call Algorithm~\ref{alg:possible-kemeny-score} independently for each partial ranking in $\R$.  


\section{\$-Bribery-Kemeny Score and Ranking Deletion Kemeny Score}\label{sec:dollar}
We begin with designing a polynomial-time algorithm for \dollarbribery. Recall that in \dollarbribery we need to change some rankings to $\X$ so that $\X$ is at most $k$ distance away from the new set of rankings. The problem inherently seems to be closer to the {\sc Knapsack} problem, in which given a set of items, $A=\{a_1,\ldots,a_n\}$, a weight function $w\colon A \rightarrow \mathbb{N}$, a value function $v\colon A\rightarrow \mathbb{N}$, and two natural numbers $T$, $W$; the goal is to decide if there exists a subset $S\subseteq A$ such that $\sum_{a\in S}w(a)\leq W$ and $\sum_{a\in S}v(a)\geq T$. We formalize this intuition below by giving a polynomial time reduction to the {\sc Knapsack} problem.

\begin{lemma}
\label{lem:red_dollarbribery-to-knapsack}
There is a polynomial time reduction from the \dollarbribery problem to the {\sc Knapsack} problem, where the values are polynomially bounded in the input size.
\end{lemma}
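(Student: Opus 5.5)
The plan is to map each ranking to a knapsack item and to observe that bribing a ranking in \dollarbribery means turning it into $\X$. Given an instance $(\C,\R,\X,\cost,\budget,k)$ with $\R=(R_1,\ldots,R_n)$, I would create items $a_1,\ldots,a_n$. I would set the weight to $w(a_i)=\cost(R_i)$ and the value to $v(a_i)=\dist{\X}{R_i}$. The capacity is $W=\budget$ and the target is $T=\dist{\X}{\R}-k$. If $T\le 0$, the instance is trivially a yes-instance with $S=\emptyset$, so I would map it to a fixed trivial yes-instance of {\sc Knapsack}. Every Kendall tau distance is computable in $\mathcal{O}(m^2)$ time, so the reduction runs in polynomial time. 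Since $v(a_i)\le \binom{m}{2}$ and $T\le n\binom{m}{2}$, all values are polynomially bounded, as the statement requires.

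For correctness I would first justify the normalization already noted in the definition of $\$$Bribery. If a solution changes some set $S\subseteq\R$ to arbitrary rankings, then replacing each changed ranking by $\X$ keeps the same cost and does not increase the distance, because $\dist{\X}{\X}=0$. So without loss of generality every bribed ranking becomes $\X$. The distance after bribing $S$ is then $\dist{\X}{\R}-\sum_{R_i\in S}\dist{\X}{R_i}$. Hence the condition ``cost at most $\budget$ and distance at most $k$'' is equivalent to $\sum_{a_i\in S} w(a_i)\le W$ together with $\sum_{a_i\in S} v(a_i)\ge T$. Both directions follow by taking the same index set $S$.

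The only step that needs any care is the normalization argument, and it is essentially immediate. It would also be good to note the consequence, though the lemma itself does not need it: because the values are polynomially bounded, the standard value-indexed dynamic program for {\sc Knapsack} runs in polynomial time, and this is what the subsequent theorem will use.
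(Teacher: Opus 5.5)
Your reduction is exactly the paper's: one item per ranking with $w(a_i)=\cost(R_i)$ and $v(a_i)=\dist{\X}{R_i}$, capacity $W=\budget$, and target $T=\dist{\X}{\R}-k$. Correctness in both versions rests on the same normalization that bribed rankings can be assumed to become $\X$. Your explicit handling of the case $T\le 0$ and your use of index sets instead of subsets of $\R$ are small tidy-ups that the paper leaves implicit.
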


\begin{proof}
Give an instance $\instanceI = (\C, \R, \X, \cost, \budget, k)$ of the \dollarbribery problem, we create an instance $\instanceJ = (A,w,v$ $,T,W)$ of the {\sc Knapsack} problem as follows. Let $\R=(R_1,\ldots,R_n)$. The set of items is $A=\{a_1,\ldots,a_n\}$. The weight function $w\colon A\rightarrow \mathbb{N}$ is defined as follows: $w(a_i)=\cost(R_i)$ and the value function $v\colon A\rightarrow \mathbb{N}$ is defined as follows: $v(a_i)=\dist{\X}{R_i}$. 
Furthermore,  $W=\budget$ and $T = D - k$, where $D=\dist{\X}{\R}$. Note that the value of every item is bounded by $|\C|^2$ and $T$ is at most $n|\C|^2$.  Next, we prove the correctness.

In the forward direction, suppose that $\instanceI$ is a yes-instance of \dollarbribery. Let $S = \R\setminus\R'$ where $\R'$ is the new set of rankings, i.e., $S$ is the set of rankings in $\R$ that are manipulated.  Note that $\dist{\X}{\R'}\leq k$ and  $\sum_{R \in S}{\cost(R)} \leq \budget$. We construct a set $S'$ as follows: $S'=\{a_i\in A \colon R_i \in S\}$. We claim that $S'$ is a solution to $\instanceJ$. Due to the construction, we know that $\sum_{a\in S'}w(a)\leq \budget$. As discussed in the definition of \dollarbribery, we can assume that every ranking in $S$ is the same as $\X$. Thus, $\dist{\X}{\R'}=\dist{\X}{\R\setminus S}$.  Hence, $\dist{\X}{S}+\dist{\X}{\R'}=D$. Since $\dist{\X}{\R'}\leq k$, it follows that $\dist{\X}{S}\geq D-k$. Thus, $\sum_{a\in S'}v(a)\geq D-k$. 

In the reverse direction, suppose that $\instanceJ$ is a yes-instance of the {\sc Knapsack} problem. Let $S\subseteq A$ be a solution to $\instanceJ$. We construct a set of rankings $S'$ as follows: $S'=\{R_i\in \R \colon a_i\in S\}$. We claim that $S'$ is a solution to $\instanceI$. Let $\R'$ be the rankings obtained by changing $S'$ to $\X$ and the remaining rankings are same in $\R$ and $\R'$.  Clearly, $\sum_{R\in S'}\cost(R)\leq \budget$. Since, $\sum_{a\in S}v(a)\geq D-k$, using the same arguments as earlier, $\dist{\X}{\R'}\leq k$. 
This completes the proof.
\end{proof}

Since there is a pseudo-polynomial time algorithm for {\sc Knapsack} that solves the problem in polynomial time if either weights or values are  polynomially bounded~\cite{kleinberg2006algorithm} and the above reduction is polynomial time, we have the following result.  

\begin{theorem}\label{thm:dollar-bribery}
\dollarbribery can be solved in polynomial time. %
\end{theorem}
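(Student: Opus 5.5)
The plan is to combine Lemma~\ref{lem:red_dollarbribery-to-knapsack} with the standard dynamic program for {\sc Knapsack} that runs in time polynomial in the number of items and the total value. Given an instance $\instanceI=(\C,\R,\X,\cost,\budget,k)$ of \dollarbribery, we first build the {\sc Knapsack} instance $\instanceJ=(A,w,v,T,W)$ from the lemma. This takes polynomial time: each $v(a_i)=\dist{\X}{R_i}$ is computed in $\mathcal{O}(m^2)$ time by checking all pairs, where $m=|\C|$. The lemma already gives the equivalence of $\instanceI$ and $\instanceJ$, so it only remains to solve $\instanceJ$ in polynomial time.

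We will not rely on the weights (the costs and $\budget$ may be exponentially large in binary). Instead we use the value-indexed dynamic program. For $i\in\{0,\ldots,n\}$ and $t\in\{0,\ldots,V\}$, where $V=\sum_i v(a_i)\le n\binom{m}{2}$, let $M[i][t]$ be the minimum total weight of a subset of $\{a_1,\ldots,a_i\}$ whose total value is exactly $t$, or $\infty$ if no such subset exists. The table is filled as follows:
\begin{itemize}
\item $M[0][0]=0$ and $M[0][t]=\infty$ for $t>0$;
\item $M[i][t]=\min\bigl(M[i-1][t],\, w(a_i)+M[i-1][t-v(a_i)]\bigr)$, where the second term is used only when $t\ge v(a_i)$.
\end{itemize}
We answer \yes if and only if some $t$ with $\max(T,0)\le t\le V$ has $M[n][t]\le W$. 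If $T\le 0$, i.e.\ $D\le k$, the empty set already works.

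Correctness is the usual exchange argument: any optimal subset either contains $a_i$ or it does not. The table has $\mathcal{O}(n\cdot nm^2)$ entries, each computed with $\mathcal{O}(1)$ arithmetic operations on numbers of polynomial bit-length. The total running time is therefore polynomial in the input size.

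There is no real obstacle here. The one point that needs care is that the pseudo-polynomial algorithm must be indexed by the values rather than the weights, because only the values (which are Kendall tau distances) are guaranteed to be polynomially bounded by Lemma~\ref{lem:red_dollarbribery-to-knapsack}.
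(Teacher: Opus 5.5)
Your proposal is correct and takes the same route as the paper. You reduce via Lemma~\ref{lem:red_dollarbribery-to-knapsack} to {\sc Knapsack}, whose values are Kendall tau distances bounded by $\binom{m}{2}$, and then solve it with the pseudo-polynomial algorithm indexed by value. The only difference is that you write out the value-indexed dynamic program (including the case $T\le 0$), whereas the paper simply cites it.
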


Note that changing a ranking to $\X$ is equivalent to deleting this ranking as the contribution of this ranking to Kendall tau distance becomes $0$ due to both the operations. Thus, we have the following result due to the same algorithm, with the difference that instead of changing it to $\X$, we delete the ranking.

\begin{theorem}\label{thm:ranking-deletion}
\voterdeletion can be solved in polynomial time. %
\end{theorem}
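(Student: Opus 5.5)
The plan is to reduce \voterdeletion to {\sc Knapsack}, mirroring the reduction of Lemma~\ref{lem:red_dollarbribery-to-knapsack}, and then use the pseudo-polynomial algorithm for {\sc Knapsack} with polynomially bounded values. The key observation is that the total Kendall tau distance is additive over rankings. Deleting a set $S\subseteq\R$ therefore yields $\dist{\X}{\R\setminus S}=\dist{\X}{\R}-\dist{\X}{S}$. This is exactly the same effect as changing every ranking in $S$ to $\X$ in the \$Bribery setting, since in both cases the contribution of each ranking of $S$ becomes $0$.

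Concretely, given $\instanceI=(\C,\R,\X,\cost,\budget,k)$ with $\R=(R_1,\ldots,R_n)$, I build items $a_1,\ldots,a_n$ with
\[
w(a_i)=\cost(R_i),\qquad v(a_i)=\dist{\X}{R_i},\qquad W=\budget,\qquad T=D-k,
\]
where $D=\dist{\X}{\R}$. If $T\le 0$, the empty set is a solution and we answer \yes directly. Each value is at most $\binom{|\C|}{2}$, so $T\le n|\C|^2$ and all values are polynomially bounded. The construction clearly takes polynomial time.

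For correctness, take a set $S$ of deleted rankings with total cost at most $\budget$ and $\dist{\X}{\R\setminus S}\le k$. By additivity, $\sum_{R_i\in S}v(a_i)=D-\dist{\X}{\R\setminus S}\ge D-k$, so the corresponding item set is a {\sc Knapsack} solution. The converse follows by the same identity read in reverse.

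Finally, I run the dynamic program over (item index, accumulated value capped at $T$) that computes the minimum weight. It runs in $\mathcal{O}(n\cdot T)=\mathcal{O}(n^2|\C|^2)$ time, polynomial in the input, and we answer \yes iff the minimum weight is at most $\budget$. There is no real obstacle here. The only point requiring care is that deletion and replacement by $\X$ yield identical distances, which is immediate from additivity; the degenerate case $T\le 0$ is handled separately above.
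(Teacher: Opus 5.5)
Your proposal is correct and takes essentially the same route as the paper. The paper also notes that deleting a ranking zeroes its contribution to the total Kendall tau distance exactly as replacing it by $X$ does, and then reuses the Knapsack reduction of Lemma~\ref{lem:red_dollarbribery-to-knapsack} with polynomially bounded values; your separate handling of $T\le 0$ and the explicit $\mathcal{O}(n\cdot T)$ dynamic program only fill in details the paper leaves implicit.
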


\section{Swap Bribery-Kemeny Score}\label{sec:swap}
In this section, we design a polynomial-time algorithm for \swapbribery problem. Recall that in \swapbribery we need to perform some admissible swaps on the set of rankings $\R$, so that $\X$ is at most $k$ distance away from the new set of rankings $\R'$. Let $\tmpR$ be a ranking over $\C$. Consider a pair of candidates $c,c'$ such that  $c\prec_X c'$ and $c'\prec_\tmpR c$. If $c,c'$ are consecutive in $\tmpR$, then we say that $\{c,c'\}$ is an admissible disagreement between $\tmpR$ and $\X$, otherwise, it is a non-admissible disagreement. In our algorithm, we swap candidates only when they are in disagreement. Before presenting the algorithm, we first establish an important property regarding admissible disagreements. Specifically, if there exists a non-admissible disagreement $\{a, b\}$ between rankings $\tmpR$ and $\X$, then it implies the existence of an admissible disagreement between $\tmpR$ and $\X$. This property is crucial because it implies that for any ranking $\tmpR$, either there exists an admissible disagreement or there is no disagreement between $\tmpR$ and $\X$ (i.e., $\dist{\X}{\tmpR} = 0$). Consequently, this ensures that in our algorithm, we can always perform an admissible swap action if $\dist{\X}{\tmpR} > 0$. We formalise all these properties below.

Throughout this section, when we refer to \emph{performing a swap action between candidates}, we implicitly mean that the candidates are in disagreement. Throughout the section, after performing a swap action on a ranking $R$, we reuse the notation $R$ for the new ranking. 
We begin with the following observation.

\begin{obs}
\label{obs:swap}
One swap operation in $\tmpR$ of an admissible disagreement between $\tmpR$ and $\X$ reduces the Kendall tau distance between them by exactly one.
\end{obs}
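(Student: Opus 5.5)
The plan is to compare $\dist{\X}{\tmpR}$ before and after the swap pair by pair, using the definition of Kendall tau distance as a count of disagreeing unordered pairs. Let $\{c,c'\}$ be an admissible disagreement, so $c \prec_\X c'$ and $c' \prec_\tmpR c$ with $c'$ and $c$ consecutive in $\tmpR$. Let $\tmpR'$ be the ranking obtained from $\tmpR$ by swapping $c$ and $c'$. The claim is that $\dist{\X}{\tmpR'} = \dist{\X}{\tmpR} - 1$. To show this, I would partition all unordered pairs of candidates into three classes and check each class separately.

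\emph{The pair $\{c,c'\}$ itself.} In $\tmpR$ it is ordered $c' \prec c$, which disagrees with $\X$. In $\tmpR'$ it is ordered $c \prec c'$, which agrees with $\X$. So this pair contributes $1$ to $\dist{\X}{\tmpR}$ and $0$ to $\dist{\X}{\tmpR'}$.

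\emph{Pairs not containing $c$ or $c'$.} The swap does not move any other candidate, so the relative order of such a pair is the same in $\tmpR$ and $\tmpR'$. Each of these pairs therefore contributes equally to both distances.

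\emph{Pairs $\{x,d\}$ with $x \in \{c,c'\}$ and $d \notin \{c,c'\}$.} This is the only step that needs care, and it is where adjacency is used. Because $c'$ and $c$ occupy consecutive positions in $\tmpR$, every other candidate $d$ is either before both of them or after both of them in $\tmpR$. After the swap they still occupy the same two consecutive positions, so $d$ stays on the same side of each of them in $\tmpR'$. Hence the relative order of $d$ with $c$, and of $d$ with $c'$, is unchanged. Since $\X$ is fixed, the agreement status of each such pair is also unchanged.

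Summing over the three classes, exactly one disagreeing pair is removed and none is created, which gives the claimed decrease of exactly one. No earlier result is needed; the argument is a direct count from the definition of $\kt$.
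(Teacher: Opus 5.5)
Your proof is correct and matches the paper's implicit reasoning. The paper states this as an observation without proof, and your pair-by-pair count is the natural justification: the swapped pair $\{c,c'\}$ goes from disagreeing to agreeing, and adjacency keeps every other candidate on the same side of both $c$ and $c'$. The paper uses the same argument for the adjacent case in the proof of Lemma~\ref{lem:pks-relative-order}.
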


\begin{lemma}
\label{lem:admissible_disagreement_existance}
The presence of a non-admissible disagreement between rankings $\tmpR$ and $\X$ implies the existence of an admissible disagreement between $\tmpR$ and $\X$. 
\end{lemma}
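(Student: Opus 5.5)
Suppose $\{a,b\}$ is a non-admissible disagreement between $\tmpR$ and $\X$, with $a \prec_\X b$ but $b \prec_\tmpR a$, and at least one candidate strictly between $b$ and $a$ in $\tmpR$. Write the segment of $\tmpR$ from $b$ to $a$ as $b = d_0 \prec_\tmpR d_1 \prec_\tmpR \cdots \prec_\tmpR d_r = a$ with $r \geq 2$. We use a discrete ``sign change'' argument along this chain of consecutive pairs: we show some consecutive pair $\{d_j, d_{j+1}\}$ is itself a disagreement. Such a pair is consecutive in $\tmpR$, so it is an admissible disagreement.

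Suppose, towards a contradiction, that no consecutive pair $\{d_j,d_{j+1}\}$ with $0 \le j < r$ is a disagreement. Then for every $j$ the relative order in $\X$ agrees with that in $\tmpR$, so $d_j \prec_\X d_{j+1}$. Since $\prec_\X$ is transitive (it is a linear order), chaining these gives $d_0 \prec_\X d_r$, that is, $b \prec_\X a$. This contradicts $a \prec_\X b$. Hence some $j$ has $d_{j+1} \prec_\X d_j$ while $d_j \prec_\tmpR d_{j+1}$ are adjacent in $\tmpR$, which is the required admissible disagreement.

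Equivalently, and more constructively, let $j$ be the least index with $a \prec_\X d_{j}$ failing to propagate. It is cleaner, though, to simply take the first $j$ such that $d_{j+1} \prec_\X d_j$, which exists by the argument above.

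There is no real obstacle here; the only care needed is to check that ``consecutive in $\tmpR$'' matches the paper's definition of admissible, namely that nothing lies between $d_j$ and $d_{j+1}$ in $\tmpR$. This holds because the $d_j$ enumerate every candidate of that segment in order. As a remark, the same argument also gives the consequence mentioned before the lemma: whenever $\dist{\X}{\tmpR}>0$, an admissible disagreement exists, since any disagreement is either admissible already or non-admissible, and in the latter case the lemma applies.
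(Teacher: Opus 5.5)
Your proof is correct and is essentially the paper's argument: you assume no consecutive pair in the stretch of $\tmpR$ between the two disagreeing candidates is a disagreement, chain the resulting $\X$-orderings by transitivity, and contradict the original disagreement. Your version localizes the admissible disagreement to that stretch, which is a slight sharpening. The muddled ``more constructively'' sentence should simply be dropped.
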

\begin{proof}
Suppose that there exists a non-admissible disagreement $\{a, b\}$ between $\tmpR$ and $\X$. Towards the contradiction, suppose that there are no admissible disagreements between $\tmpR$ and $\X$. Since $\{a, b\}$ is a disagreement, let $a \prec_\tmpR b$ and  $b \prec_\X a$. Note that $\{a, b\}$ is a non-admissible  disagreement, so there must be some candidates $c_1, c_2, \ldots, c_p$ ranked between $a$ and $b$ in $\tmpR$. Let $\tmpR = \ldots a \prec c_1 \prec \ldots \prec c_p \prec b \prec \ldots$. Since there is no admissible disagreement between $\tmpR$ and $\X$, we know that $a \prec_\X c_1$, $ c_i \prec_\X c_{i+1}$, for all $i\in [p-1]$, and $c_p \prec_\X b$. This implies $a \prec_\X b$ due to transitivity in the ranking $\X$, which is a contradiction.
\end{proof}

Due to Lemma~\ref{lem:admissible_disagreement_existance}, we have the following useful property. 

\begin{corollary}
\label{obs:algorithm}
We can always perform an admissible swap action in the ranking $\tmpR$ if $\dist{\X}{\tmpR} > 0$.
\end{corollary}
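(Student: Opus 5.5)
The plan is to argue by cases on the kind of disagreements that exist between $\tmpR$ and $\X$, and to use Lemma~\ref{lem:admissible_disagreement_existance} to reduce the non-admissible case to the admissible one.

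Assume $\dist{\X}{\tmpR} > 0$. By the definition of the Kendall tau distance, there is at least one pair $\{a,b\}\subseteq \C$ whose relative order differs in $\tmpR$ and $\X$. Every such disagreement is either admissible or non-admissible.

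\emph{Case 1: $\{a,b\}$ is admissible.} Then $a$ and $b$ are consecutive in $\tmpR$, so the swap $(a,b)_\tmpR$ is an admissible swap. It is also performed between candidates in disagreement, so it is an admissible swap action in the sense used in this section.

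\emph{Case 2: $\{a,b\}$ is non-admissible.} Lemma~\ref{lem:admissible_disagreement_existance} yields some admissible disagreement $\{c,c'\}$ between $\tmpR$ and $\X$. We then apply Case~1 to $\{c,c'\}$.

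In both cases an admissible swap action exists. By Observation~\ref{obs:swap}, performing it decreases $\dist{\X}{\tmpR}$ by exactly one, which is what justifies repeatedly applying such swaps in the algorithm that follows.

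I do not expect a real obstacle. The only point to state carefully is that \emph{disagreement} is exhaustively split into admissible and non-admissible, so the two cases cover everything. All the substantive work, namely the transitivity argument along the chain of intermediate candidates, is already done in Lemma~\ref{lem:admissible_disagreement_existance}.
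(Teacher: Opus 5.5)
Your proof is correct and takes the same route as the paper, which obtains the corollary directly from Lemma~\ref{lem:admissible_disagreement_existance} with the admissible case handled immediately by definition, exactly as you do. Your closing appeal to Observation~\ref{obs:swap} is not needed for the statement itself, but it matches how the paper goes on to use the corollary.
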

By performing $s$ swaps in the ranking $\tmpR$, we mean executing $s$ swap operations sequentially, each applied to the updated ranking resulting from the previous swap. It is important to note that a swap which is initially non-admissible may become admissible after performing some admissible swaps, allowing us to perform that swap at a later stage. Due to Observation~\ref{obs:swap} and Corollary~\ref{obs:algorithm}, we can always perform a sequence of $0 \leq s \leq \dist{\X}{\tmpR}$ admissible swap actions in $\tmpR$ and that results in a drop of $s$ in the Kendall tau distance between $\tmpR$ and $\X$.



We finally move towards designing our algorithm.

\begin{theorem}\label{thm:swap-bribery}
\swapbribery can be solved in polynomial time.
\end{theorem}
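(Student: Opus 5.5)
The plan is to reduce the problem to a simple greedy allocation of swaps. By Observation~\ref{obs:swap} and Corollary~\ref{obs:algorithm}, for each ranking $R_i\in\R$ and each integer $0\le s_i\le \dist{\X}{R_i}$, we can perform $s_i$ admissible swaps on $R_i$ and lower $\dist{\X}{R_i}$ by exactly $s_i$, at cost $s_i\cdot\cost(R_i)$. We also need the converse: no sequence of $s$ admissible swaps lowers the distance by more than $s$. This holds because a single adjacent swap changes the relative order of exactly one pair, so it changes $\dist{\X}{R_i}$ by exactly $\pm 1$. Hence any manipulation that spends $s_i$ swaps on $R_i$ leaves distance at least $\dist{\X}{R_i}-s_i$ (and at least $0$). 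So we may assume every swap resolves a disagreement, and the problem becomes the following: choose integers $0\le s_i\le d_i:=\dist{\X}{R_i}$ with $\sum_i s_i\ge D-k$, where $D=\dist{\X}{\R}$, minimizing $\sum_i s_i\cdot\cost(R_i)$. Then compare this minimum to $\budget$.

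This is a fractional-knapsack-type problem with unit items. Each ranking $R_i$ supplies $d_i$ units of reduction, each priced at $\cost(R_i)$. The algorithm is as follows. If $D\le k$, answer \yes. Otherwise sort the rankings by nondecreasing $\cost(R_i)$ and take reductions greedily, using all $d_i$ swaps of the cheapest ranking before moving to the next, until $D-k$ units have been bought. If $\sum_i d_i<D-k$ this is impossible, but that cannot happen when $k\ge 0$. Answer \yes if and only if the total cost is at most $\budget$. The swaps themselves are realised constructively through Corollary~\ref{obs:algorithm}: we repeatedly find an adjacent disagreeing pair and swap it.

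For correctness I would use an exchange argument. Take any feasible vector $(s_i)$. If some cheaper ranking $R_a$ has $s_a<d_a$ while a more expensive ranking $R_b$ has $s_b>0$, then shifting one unit from $b$ to $a$ keeps feasibility and does not increase the cost. Repeated shifting therefore turns any optimal solution into the greedy one, with no increase in cost. Combined with the lower bound from the previous paragraph, the instance is a yes-instance if and only if the greedy cost is at most $\budget$.

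For the running time, each $d_i$ can be computed in $\mathcal{O}(m^2)$ and sorting takes $\mathcal{O}(n\log n)$. The greedy cost can be computed arithmetically, without simulating individual swaps, which matters if costs are large. If the explicit swap sequence is required, there are at most $nm^2$ swaps, each found in $\mathcal{O}(m)$ time. I expect the main obstacle to be the rigorous justification of the lower bound, namely that arbitrary swaps, including ones that create new disagreements, can never beat the one-unit-per-swap rate. The $\pm1$ parity observation handles this, and I would state it explicitly before the exchange argument.
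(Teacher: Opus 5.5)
Your proof is correct, but it takes a different route from the paper. Both arguments begin the same way: each ranking $R_i$ offers $d_i=\kt(X,R_i)$ units of distance reduction, each at price $\mathrm{cost}(R_i)$ per swap. They differ in how they solve the resulting allocation problem.

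The paper uses a knapsack-style dynamic program $T[i,s_1,s_2,j]$. Its indices are the swaps on the current ranking, the swaps on the earlier rankings, and the remaining distance allowance. The table has size polynomial in $n$, $m$ and $k$. This is only polynomial once $k$ is implicitly capped at $D\le n\binom{m}{2}$, which the paper does not say.

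You instead note that the per-ranking cost is linear in the number of swaps. So the problem ``choose $0\le s_i\le d_i$ with $\sum_i s_i\ge D-k$ minimising $\sum_i s_i\cdot\mathrm{cost}(R_i)$'' is solved exactly by buying the cheapest units first, which an exchange argument justifies. Your route is simpler and strongly polynomial.

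Your $\pm1$ observation also fills a gap the paper leaves open. Each adjacent swap flips exactly one pair, so $s$ swaps can lower the distance by at most $s$. The paper simply declares that it only swaps disagreeing pairs, without showing that other swaps cannot help.

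What the DP buys in return is robustness to the cost model. It would still work if bribing a ranking cost an arbitrary function of the number of swaps, for example a fixed fee plus a per-swap charge. There the greedy-by-unit-price exchange argument fails.

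One small point to make explicit in your exchange argument: an optimal vector may have $\sum_i s_i> D-k$. First trim it to exactly $D-k$ units, which cannot raise the cost because costs are nonnegative. Only then shift units toward cheaper rankings to reach the greedy solution.
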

\begin{proof}
Let $\instanceI = (\C,\R,\X,\cost,\budget,k)$ be a given instance of the \swapbribery problem. We design a dynamic programming algorithm to solve this problem. Let $\R = (R_1, R_2, \ldots, R_n)$. We define the dynamic programming table as follows: $T[i, s_1, s_2, j] = $ minimum cost incurred in performing admissible swap actions on the first $i$ rankings such that $s_1$ swaps are done to $R_i$, $s_2$ swaps are done to the set of rankings ($R_1, R_2, \ldots, R_{i-1}$), and the Kendall tau distance between the set of rankings ($R_1, R_2, \ldots, R_{i}$) and $\X$ is at most $j$. 

For $i=1$, there are no previous rankings to consider; however, we write $T[1,s_1,s_2,j]$ for brevity, and for $s_2>0$, it is an invalid entry. For $1 < i \leq |\R|$, if either $s_1 > \dist{\X}{R_i}$ or $s_2 > \dist{\X}{(R_1, R_2, \ldots, R_{i-1})}$, then also  the entry $T[1,s_1,s_2,j]$  is invalid. 
 We store $\infty$ for the invalid entries or if the allowed number of swaps is insufficient to achieve the desired Kendall tau distance. 

We compute the remaining table entries as follows.

\textbf{Base Case:} 
For $i = 1$, $0 \leq s_1 \leq \dist{\X}{R_1}$, $s_2 = 0$ and $0 \leq j \leq k$, we compute the table as follows.
\begin{equation}
\label{dp:swap-bribery-base}
T[1, s_1, 0, j] = 
\begin{cases} 
s_1 \cdot \cost(R_1) & \text{If} \  \dist{\X}{R_1} - s_1 \leq j\\
\infty & \text{otherwise}.
\end{cases}
\end{equation}


\textbf{Recursive Step:} For all $1 < i \leq |\R|$, $0 \leq s_1 \leq \dist{\X}{R_i}$, $0 \leq s_2 \leq \dist{\X}{(R_1, R_2, \ldots, R_{i-1})}$, and $0 \leq j \leq k$, we compute the table entries as follows.
\begin{equation}
\label{dp:swap-bribery-recursive}
\begin{split}
T[i, s_1, s_2, j] = \min_{s_{2}' + s_{2}'' = s_2} & T[i-1, s_{2}', s_{2}'', j - \dist{\X}{R_i} + s_1] \\
& + s_1 \cdot \cost(R_i)
\end{split}
\end{equation}

Next, we prove the correctness of this dynamic programming algorithm in the following lemma.

\begin{lemma}
\label{lem:swap-bribery-dp-correctness}
For every $i \in [|\R|]$, $0 \leq s_1 \leq \dist{\X}{R_i}$, $0 \leq s_2 \leq \dist{\X}{(R_1, R_2, \ldots, R_{i-1})}$, and $0 \leq j \leq k$, Equations~\ref{dp:swap-bribery-base} and \ref{dp:swap-bribery-recursive} correctly compute $T[i, s_1, s_2, j]$.
\end{lemma}
\begin{proof}
We will prove the correctness of the dynamic programming solution through induction on $i$. For the base case, when $i = 1$, if $\dist{\X}{R_1}-s_1 \leq j$, then by performing $s_1$ swaps, the distance between the resultant ranking and $\X$ is at most $j$, otherwise $s_1$ swaps does not decrease the distance to $j$. Hence, $T[1,s_1,0,j]$ is computed correctly by Equation~\ref{dp:swap-bribery-base}. 

For the inductive step, let us suppose that the table entries are computed correctly for all $i \leq l$, $0 \leq s_1 \leq \dist{\X}{R_i}$, $0 \leq s_2 \leq \dist{\X}{(R_1, R_2, \ldots, R_i)}$ and $0 \leq j \leq k$. We denote the value of $T[l+1, s_1, s_2, j]$ computed by Equation~\ref{dp:swap-bribery-recursive} as $alg$. Let $\opt$ be a set of swap operations corresponding to the minimum cost incurred in performing admissible swap actions on the first $l+1$ rankings such that $s_1$ swaps are done to $R_{l+1}$, $s_2$ swaps are done to the set of rankings $(R_1, R_2, \ldots, R_l)$, and the Kendall tau distance between the set of rankings $(R_1, R_2, \ldots, R_{l+1})$ and $\X$ is at most $j$. Let $\opt_{\cost}$ be the cost of performing swaps in $\opt$. Clearly, $\opt_{\cost} \leq alg$. Let us suppose that $s_l$ admissible swaps are performed on $R_l$ in $\opt$, this implies that $s_2 - s_l$ admissible swaps are performed on $(R_1, R_2, \ldots, R_{l-1})$. We claim that $T[l, s_l, s_2 - s_l, j - \dist{\X}{R_{l+1}} + s_1] + s_1 \cdot \cost(R_{l+1}) = \opt_{\cost}$. 


Due to Equation~\ref{dp:swap-bribery-recursive},  $alg \leq T[l, s_l, s_2 - s_l, j - \dist{\X}{R_{l+1}} + s_1] + s_1 \cdot \cost(R_{l+1})$. Since $alg\geq \opt_{\cost}$,  $T[l, s_l, s_2 - s_l, j - \dist{\X}{R_{l+1}} + s_1] + s_1 \cdot \cost(R_{l+1}) \geq \opt_{\cost}$. 
%
Since $\opt_{\cost} - s_1 \cdot \cost(R_{l+1})$ is the cost of performing $s_2$ admissible swaps on the set of rankings $(R_1, R_2, \ldots, R_l)$ where $s_l$ admissible swaps are performed on $R_l$ and $s_2 - s_l$ swaps are performed on the set of rankings $(R_1, R_2, \ldots, R_{l-1})$ such that the Kendall tau distance is at most $j - (\dist{\X}{R_{l+1}} - s_1) = j - \dist{\X}{R_{l+1}} + s_1$ from $\X$, due to induction hypothesis, $T[l, s_l, s_2 - s_l, j - \dist{\X}{R_{l+1}} + s_1]  \leq \opt_{\cost} - s_1 \cdot \cost(R_{l+1})$. Hence, $T[l, s_l, s_2 - s_l, j - \dist{\X}{R_{l+1}} + s_1] + s_1 \cdot \cost(R_{l+1}) = \opt_{\cost}$. This implies that $alg \leq \opt_{\cost}$. Hence $alg = \opt_{\cost}$. Thus Equations~\ref{dp:swap-bribery-base} and \ref{dp:swap-bribery-recursive} correctly compute the dynamic programming table.
\end{proof}
The minimum cost incurred in performing admissible swap actions on $\R$, such that the Kendall tau distance between $\R$ and $\X$ is at most $j$ is $\min\limits_{\substack{s_1, s_2}} T[|\R|, s_1, s_2, k]$. So we return \yes if $\min\limits_{\substack{s_1, s_2}} T[|\R|, s_1, s_2, k]$ $ \leq \budget$, otherwise \no. The correctness follows due to Lemma~\ref{lem:swap-bribery-dp-correctness}
 
We compute at most $|\R|^2m^2k$ entries and each entry can be computed in time $\mathcal{O}(m^2)$. Thus, the algorithm runs in polynomial time.
\end{proof}


\section{Candidate Deletion Kemeny Score}\label{sec:cd}

In this section, we begin with the case when $k=0$, and design a polynomial time algorithm by reducing the problem to the {\sc Heaviest Common Increasing Subsequence} (\mwcis) on permutations problem. In the \mwcis problem, given $p$ sequences of numbers, a weight function $w$ on the numbers, and an integer $W$, the goal is to decide whether there exists a sequence of numbers with a total weight of at least $W$ that is an increasing subsequence of each of the given $p$ sequences. In our reduction, these sequences are permutations. 

\begin{lemma}
There exists a polynomial time reduction from \candidatedeletion to \mwcis on permutations when $k=0$.
\end{lemma}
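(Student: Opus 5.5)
We relabel candidates so that $\X$ becomes the identity, and then read each ranking in $\R$ as a permutation of numbers. The distance is $0$ exactly when every remaining ranking agrees with $\X$ on the surviving candidates. Keeping a candidate set with $\X$-order increasing in every ranking is therefore the same as finding a common increasing subsequence. Deletion cost translates into the weight of the kept candidates.

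Let $\instanceI=(\C,\R,\X,\cost,\budget,k=0)$ be an instance, with $\C=\{c_1,\ldots,c_m\}$ and $\R=(R_1,\ldots,R_n)$. We define the bijection $\phi\colon\C\to[m]$ by $\phi(c)=i$ if and only if $\X(i)=c$. For every $R_j\in\R$ we form the sequence $\sigma_j=(\phi(R_j(1)),\ldots,\phi(R_j(m)))$, which is a permutation of $[m]$. We set $w(i)=\cost(\phi^{-1}(i))$ and $W=\cost(\C)-\budget$, where $\cost(\C)=\sum_{c\in\C}\cost(c)$. The \mwcis instance is $(\sigma_1,\ldots,\sigma_n,w,W)$. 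The construction clearly takes polynomial time.

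The key fact is the following. For $Y\subseteq\C$, $\dist{\X\vert_Y}{\R\vert_Y}=0$ if and only if $R_j\vert_Y=\X\vert_Y$ for every $j$. This holds if and only if, for all $c,c'\in Y$ and all $j$, $c\prec_{R_j}c'$ implies $\phi(c)<\phi(c')$. That is, $\phi(Y)$, listed in the order of $\sigma_j$, is an increasing subsequence of $\sigma_j$ for every $j$.

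\textbf{Forward direction.} Suppose deleting $D\subseteq\C$ with $\cost(D)\le\budget$ yields distance $0$. Then $Y=\C\setminus D$ gives a common increasing subsequence $\phi(Y)$. Its weight is $\cost(\C)-\cost(D)\ge W$.

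\textbf{Reverse direction.} Suppose a common increasing subsequence of weight at least $W$ exists. Its elements form a set $Z\subseteq[m]$. We delete $D=\C\setminus\phi^{-1}(Z)$, whose cost is $\cost(\C)-w(Z)\le\budget$. The order of any two kept candidates is increasing under $\phi$ in every ranking, so it agrees with $\X$, and the distance is $0$.

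The only delicate point is making precise that being a common increasing subsequence of every $\sigma_j$ is equivalent to pairwise agreement with $\X$. This is the step to state with care. Pairwise agreement on all pairs of $Y$ is exactly what it means for the restricted sequence to be increasing, so no transitivity argument is needed. Edge cases are harmless. If $W\le0$ the answer is trivially \yes on both sides, since the empty sequence is allowed and deleting all candidates is allowed.
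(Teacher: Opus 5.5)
Your proposal is correct and follows the paper's proof essentially step for step. Both use the relabeling $f(\X(i))=i$, turn each ranking into a permutation, and set $w(i)=\cost(f^{-1}(i))$ and $W=\sum_{c\in\C}\cost(c)-\budget$; both directions then match kept candidates to the elements of the common increasing subsequence. The only difference is presentation: you state the pairwise-agreement equivalence directly, whereas the paper argues each direction by contradiction.
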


\begin{proof}
Let $\instanceI=(\C,\R,\X,\cost,\budget,k)$ be an instance of \candidatedeletion where $k=0$. Let $\R=(R_1,\ldots,R_n)$. We define a function $f\colon \C \rightarrow [m]$ as follows: $f(\X(i))=i$. Recall that $\X$ is a permutation over $\C$ and $\C$ has $m$ candidates. We construct a set of permutations $J_1,\ldots,J_n$ as follows. Each $J_i$ is an $m$-length sequence, where for $j\in [m]$, $J_i(j)=f(R_i(j))$. We can observe that $J_i(j) \neq J_i(k)$ when $j \neq k$ as $R_i(j) \neq R_i(k)$ when $j \neq k$ and $f$ is injective, which implies that the elements of $J_i$ are unique for all $i$ and due to the construction the maximum value of $f$ and the length of $J_i$ are both $m$. Hence $J_i$ is a permutation for all $i$. The weight of number $i$ is $w(i) = \cost(f^{-1}(i))$. Let $W=\sum_{c\in \C}\cost(c)-\budget$. Let $\instanceJ=(J_1,\ldots,J_n, w, W)$ be an instance of \mwcis on permutations. Next, we prove the equivalence between the two instances. 

In the forward direction, let $Z$ be a solution to $\instanceI$. Let $\tilde{Z}=f(\C \setminus Z)$ We claim that $J_1\lvert_{\tilde{Z}}$ is a solution to $\instanceJ$. Since $\sum_{c\in Z}\cost(c)\leq \budget$, $\sum_{c\in \C \setminus Z}\cost(c) \geq W$. Thus $\sum_{i \in \tilde{Z}} w(i) \geq W$. Suppose that $J_1\lvert_{\tilde{Z}}$ is not an increasing subsequence of the permutation $J_i$, $i\in [n]$. Then,  there exists $\{\ell , \ell'\}\subseteq \tilde{Z}$ such that $\ell < \ell'$, and $\ell'$ is before $\ell$ in $J_i$. Since $\ell < \ell'$, it follows that $f^{-1}(\ell)\prec_X f^{-1}(\ell')$. Since $\ell'$ is before $\ell$ in $J_i$ it follows that $f^{-1}(\ell') \prec_{R_i} f^{-1}(\ell')$.  Furthermore, since $\{f^{-1}(\ell),f^{-1}(\ell')\}\subseteq \C\setminus Z$, it contradicts that $Z$ is a solution to $\instanceI$. 


In the reverse direction, let $\Pi$ be a solution to $\instanceJ$. Let $S$  be the set of numbers that are in $\Pi$ and $\tilde{S}$ be the set of numbers that are not in $\Pi$. Let $Z=\{f^{-1}(s)\colon s\in \tilde{S}\}$. We claim that $Z$ is a solution to $\instanceI$. Since $\sum_{i \in S} w(i) \geq W$ it follows that $\sum_{c\in Z}\cost(c)\leq \budget$. Suppose $Z$ is not a solution to $\instanceI$, then there exists a pair of candidates $\{c,c'\} \subseteq \C \setminus Z$ and a ranking $R_i$ such that $c\prec_X c'$ and $c' \prec_{R_i} c$. Suppose that $c=X(j)$ and $c'=X(j')$. Thus $j<j'$ and $j'$ is before $j$ in the sequence $J_i$. Since $\{c,c'\} \subseteq \C\setminus Z$ it follows that$\{j,j'\} \subseteq S$.
This contradicts the fact that $\Pi$ is an increasing common subsequence of $J_1,\ldots,J_n$. 
\end{proof}

Chan et al.~\cite{DBLP:conf/isaac/ChanZFYZ05} designed an algorithm for \mwcis with unit weights. 
The same algorithm can be used for non-unit weights by storing the maximum weight of the common increasing subsequence as the rank of a match instead of the longest length. We present the complete algorithm for completeness. 

\begin{lemma}
     \mwcis can be solved in polynomial time when the sequences are permutations.
\end{lemma}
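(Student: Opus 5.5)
The plan is to reduce \mwcis on permutations to a maximum-weight path problem in a directed acyclic graph. Let $J_1,\ldots,J_n$ be permutations of $[m]$ with weight function $w$ on $[m]$. For each permutation $J_i$ and each number $a\in[m]$, I will record the position $\mathrm{pos}_i(a)$, meaning the index $j$ with $J_i(j)=a$; this takes $\mathcal{O}(nm)$ time. The key observation is that, since every number occurs exactly once in every permutation, a common subsequence is determined by its \emph{set} of numbers. The relevant condition is then a pairwise one. A set $S\subseteq[m]$, listed in increasing order, is a common increasing subsequence of all $J_i$ if and only if for every pair $a<b$ in $S$ we have $\mathrm{pos}_i(a)<\mathrm{pos}_i(b)$ for all $i\in[n]$.

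Next I will build a digraph $G$ on vertex set $[m]$. There is an arc $a\to b$ exactly when $a<b$ and $\mathrm{pos}_i(a)<\mathrm{pos}_i(b)$ for every $i\in[n]$. Each arc increases the numerical value, so $G$ is acyclic, and $G$ can be constructed in $\mathcal{O}(nm^2)$ time. The relation defined by the arcs is an intersection of strict total orders (the natural order on $[m]$ together with each order induced by $J_i$). It is therefore transitive, and hence any directed path in $G$ is pairwise comparable. This gives the central claim: common increasing subsequences correspond exactly to directed paths in $G$, with the sequence weight equal to the sum of vertex weights along the path. For the forward direction, consecutive elements of a common increasing subsequence are joined by arcs. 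For the reverse direction, transitivity makes every pair on a path satisfy the pairwise condition above.

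The algorithm then mirrors the approach of Chan et al.~\cite{DBLP:conf/isaac/ChanZFYZ05}, with maximum weight replacing length. I process $b=1,\ldots,m$ in increasing order and set
\[
D[b]=w(b)+\max\bigl(\{0\}\cup\{D[a]\colon a\to b \text{ in } G\}\bigr),
\]
where $D[b]$ is the maximum weight of a common increasing subsequence ending in $b$. I answer \yes iff $\max_b D[b]\geq W$ (and \yes trivially if $W\le 0$). Correctness of the recurrence follows by a standard induction on $b$ using the path correspondence, and the total running time is $\mathcal{O}(nm^2)$, which is polynomial.

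The main obstacle is the transitivity step, which justifies checking only consecutive pairs along a path. It holds because each of the $n+1$ orders involved is transitive, so I expect it to be easy. The remaining work is bookkeeping. Weights are nonnegative since they come from $\cost$, so taking the maximum with $0$ is safe and the empty subsequence is handled correctly.
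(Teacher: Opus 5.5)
Your proposal is correct and is essentially the paper's argument: your arc $a\to b$ is exactly the paper's relation ``$\Delta_a$ dominates $\Delta_b$'', and your recurrence for $D[b]$, evaluated in increasing order of value, is the paper's recurrence for the rank $R(\Delta_b)$, with the same polynomial running time. Your explicit transitivity argument also makes rigorous a step the paper only uses implicitly, namely that appending $b$ to an optimal sequence ending at a dominating match still gives a valid common increasing subsequence.
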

\begin{proof}

Let $\instanceI = (J_1, J_2, \ldots,  J_m, w, W)$ be the given instance of the problem, where for all $i \in [m]$, $J_i$ is a permutation over $[n]$ and for all $x \in [n]$, $w(x) \geq 0$. Let $J_i^{-1}(x)$ denote the position of $x$ in $J_i$ and let $J^k_i$ denote the prefix of $J_i$ with k elements.

We define the notations \emph{match}, \emph{dominating match}, and \emph{rank} as follows. A match is a $m$-tuple, $(i_1, \ldots, i_m)$ where $J_1(i_1) = J_2(i_2) = \ldots = J_m(i_m)$, the value of this match is $J_1(i_1)$. A match $(\delta_1, \ldots, \delta_m)$ is called a dominating match of another match $(\theta_1, \ldots, \theta_m)$ if $J_1(\delta_1)$ $< J_1(\theta_1)$ and $\delta_j < \theta_j$ for all $1 \leq j \leq m$. Note that $\Delta_j$ denotes the match with the value of $j$ and $\Delta_j$ is a unique match because our sequences are permutations. The rank of a match $\Delta = (\delta_1, \ldots, \delta_m)$, denoted by $R(\Delta)$, is the weight of the \mwcis of $J^{\delta_1}_1, J^{\delta_2}_2, \ldots, J^{\delta_m}_m$ such that $J_1(\delta_1)$ is the last element of the \mwcis. For $\Delta_1, \ldots, \Delta_n$, $R(\Delta_j)$ is computed in increasing order of $j$ as follows,

\begin{equation}
\label{dp:hcis}
R(\Delta_j) = 
\begin{cases} 
w(\Delta_j) & \text{if no match} \\ & \text{dominates } \Delta_j.\\
w(\Delta_j) + max\{R(\Delta_k \vert \Delta_k \text{ dominates } \Delta_j\} & \text{otherwise}.\\
\end{cases}
\end{equation}
It is easy to see that \mwcis would be $\max_{\Delta_j} R(\Delta_j)$. For the decision version, we accept iff $\max_{\Delta_j} R(\Delta_j) \geq W$.

We will now prove the correctness of the dynamic programming solution through induction on $j$. Note that because the sequences are permutations, there are $n$ matches with values $1$, $2$, $\ldots$, $n$. We can compute all the matches in time $\mathcal{O}(m \cdot n)$ and we process them in the increasing order of their value i.e., for a match with value $j$, the ranks of all the matches with value $1, \ldots, j-1$ should have been computed first as a match with value $j$ can only be dominated by a match which has a value less than $j$.

For the base case, $R(\Delta_1) = w(\Delta_1) = w(1)$ as there is no match that can dominate $\Delta_1$ and hence the \mwcis ending with value $1$ only has one element with weight $w(1)$. Let us suppose that for all $y < j$, $R(\Delta_y)$ is computed correctly. For the first case assume $w_{OPT}$ as the optimal rank of match $\Delta_j$ that is the weight of the sequence $OPT = j$ i.e. the \mwcis ending at match $\Delta_j$ contains only the element $j$. It is obvious in this case that the rank computed by Equation \ref{dp:hcis} $= w_{OPT}$. For the second case assume $w_{OPT}$ as the optimal rank of the match $\Delta_j$ that is weight of the sequence $OPT = \ldots, x, j$. We know that the rank of match $\Delta_j$ computed by Equation \ref{dp:hcis} $\leq w_{OPT}$ as $w_{OPT}$ is the optimal rank of the match $\Delta_j$. Since $x$ is before $j$ in $OPT$, $x$ has to be before $j$ in every sequence $J_i$ and $x < j$, hence $\Delta_x$ dominates $\Delta_j$ which means that the rank of match $\Delta_j$ computed by Equation \ref{dp:hcis} $\geq w(\Delta_j) + R(\Delta_x) \geq w_{OPT}$. Hence the rank of match $\Delta_j$ computed by Equation \ref{dp:hcis} $= w_{OPT}$. Hence Equation \ref{dp:hcis} correctly computes the rank of all matches.

We precompute all the $n$ matches in time $\mathcal{O}(m \cdot n)$, then for each match we check all the $n$ matches for domination in time $\mathcal{O}(m)$, hence the algorithm runs in time  $\mathcal{O}(m \cdot n^2)$



\end{proof}

\balance

\begin{theorem}\label{thm:cand_del_polyk0}
    \candidatedeletion can be solved in polynomial time when $k=0$. 
\end{theorem}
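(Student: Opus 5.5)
The plan is to combine the two preceding lemmas: the polynomial-time reduction from \candidatedeletion with $k=0$ to \mwcis on permutations, and the polynomial-time dynamic program for \mwcis on permutations. Given an instance $\instanceI=(\C,\R,\X,\cost,\budget,0)$, we first build the instance $\instanceJ=(J_1,\ldots,J_n,w,W)$ exactly as in the reduction lemma. Each $J_i$ is obtained by relabelling the candidates of $R_i$ through $f(\X(i))=i$; the weights are $w(i)=\cost(f^{-1}(i))$; and the threshold is $W=\sum_{c\in\C}\cost(c)-\budget$. Building $\instanceJ$ takes $\mathcal{O}(nm)$ time: we compute $f$ once and then rewrite each ranking.

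Next we run the dynamic program of Equation~\ref{dp:hcis} on $\instanceJ$. We compute the ranks $R(\Delta_j)$ of all $m$ matches in increasing order of value and accept if and only if $\max_j R(\Delta_j)\geq W$. We return \yes for $\instanceI$ exactly when this test succeeds. Correctness follows from the equivalence shown in the reduction lemma: $\instanceI$ is a yes-instance iff $\instanceJ$ is a yes-instance. That lemma's forward and reverse directions already account for the budget constraint through the complement-weight threshold $W$.

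The total running time is the reduction cost plus $\mathcal{O}(n\cdot m^2)$ for the \mwcis algorithm. The roles of $n$ and $m$ are swapped relative to that lemma's notation: here there are $n$ permutations over $[m]$. The total is therefore polynomial.

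The only point needing care is an edge case. If $W\leq 0$, the empty subsequence suffices; this corresponds to deleting all candidates, which is affordable because $\sum_{c\in\C}\cost(c)\leq\budget$. The dynamic program as stated returns the maximum over nonempty matches, which is at least $0\geq W$ since weights are nonnegative, so it still accepts. I do not expect any genuine obstacle, since the theorem is a direct composition of the two lemmas.
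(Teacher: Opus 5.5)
Your proposal is correct and matches the paper's route: the theorem follows by composing the reduction from \candidatedeletion with $k=0$ to \mwcis on permutations with the polynomial-time dynamic program for \mwcis. Your remarks on the swapped roles of $n$ and $m$ in the $\mathcal{O}(n\cdot m^2)$ bound, and on the $W\leq 0$ case (which the dynamic program still accepts because weights are nonnegative), are accurate details that the paper leaves implicit.
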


Next, we show that for one ranking (i.e., $|\R|=1$), the problem is equivalent to the  
{\sc Partial Vertex Cover} ({\sc WPVC}) problem for permutation graphs. In {\sc WPVC}, given a vertex-weighted graph $G$ with weight function $w\colon V(G)\rightarrow \mathbb{N}$, and integers $W,t$, the goal is to find a subset $S\subseteq V(G)$ of weight at most  $W$ that covers at least $t$ edges, i.e., for at least $t$ edges, at least one of the endpoints is in $S$. 
A \emph{permutation graph} is a graph whose vertices represent the elements of a permutation, and whose edges represent pairs of elements that are reversed by the permutation, and we say that such a permutation realises the given permutation graph. The complexity of {\sc WPVC} for permutation graph is not known. However, if {\sc WPVC} for permutation graph is \nph, then \candidatedeletion is also \nph; and if it is polynomial-time solvable, then \candidatedeletion is polynomial time solvable for one ranking. 

\begin{lemma}\label{lem:wpvc_cd}
    There is a polynomial time reduction from {\sc WPVC} on permutation graph to \candidatedeletion. 
\end{lemma}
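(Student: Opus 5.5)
Given an instance $(G,w,W,t)$ of {\sc WPVC} where $G$ is a permutation graph, we will first compute, in polynomial time, a permutation realising $G$. Permutation graphs can be recognised, and a realising permutation found, in polynomial (indeed linear) time by the known recognition algorithms. This yields a bijection $V(G)\to[m]$, $m=|V(G)|$, and a permutation $\sigma$ such that $uv\in E(G)$ if and only if $u,v$ appear in opposite relative order in $\sigma$ and in the identity order.

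We then build the \candidatedeletion instance directly from this. The candidate set $\C$ is $V(G)$. The ranking $\X$ lists the vertices in identity order. The single ranking is $\R=(R_1)$ with $R_1=\sigma$. Costs are $\cost(v)=w(v)$, the budget is $\budget=W$, and $k=|E(G)|-t$.

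The key identity is that the disagreeing pairs between $\X$ and $R_1$ are exactly the edges of $G$. So $\dist{\X}{R_1}=|E(G)|$. Moreover, after deleting a set $Z$ of candidates from both rankings, the remaining disagreements are exactly the edges of $G-Z$. Restriction preserves the relative order of surviving pairs, so $\dist{\X\vert_{\C\setminus Z}}{R_1\vert_{\C\setminus Z}}=|E(G-Z)|=|E(G)|-(\text{number of edges covered by }Z)$.

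Correctness is then a one-line equivalence in each direction. $Z$ has weight at most $W$ if and only if its deletion cost is at most $\budget$. $Z$ covers at least $t$ edges if and only if the remaining distance is at most $|E(G)|-t=k$. So a solution of one instance is literally a solution of the other.

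The main point needing care is obtaining the realising permutation. If the input already supplies it, the reduction is trivial. Otherwise we rely on the polynomial-time recognition of permutation graphs, which returns a realiser. We should also handle the edge case $t>|E(G)|$, where $k<0$, by mapping to a trivial \no instance.
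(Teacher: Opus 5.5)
Your proposal is correct and is essentially the paper's proof. It uses the same construction: $\mathcal{C}=V(G)$, $X$ the identity order, the single ranking equal to the realising permutation, $\mathtt{cost}=w$, $\mathtt{bgt}=W$, and $k=|E(G)|-t$. It also rests on the same observation, that after deleting $Z$ the remaining disagreements are exactly the edges of $G-Z$. Your remarks on computing the realising permutation in polynomial time via permutation-graph recognition, and on the case $t>|E(G)|$, fill in details that the paper's proof leaves implicit.
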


\begin{proof}
    Let $\instanceI=(G,w,W,t)$ be an instance of {\sc WPVC} where $G$ is a permutation graph. Let $\Pi$ be a permutation realising the permutation graph $G$. Let $n=|V(G)|$. Let $\C=V(G)$, $X=1 \succ 2 \succ \ldots \succ n$, $R=\Pi$, $\cost=w$, $\budget=W$,  and $k=|E(G)|-t$. Let $\instanceJ=(\C,\R,X,\cost,\budget,k)$ be an instance of \candidatedeletion. Next, we prove the equivalence between the two instances. In the forward direction, let $Z$ be a solution to $\instanceI$. We claim that $Z$ is also a solution to $\instanceJ$. Note that the number of edges in $E(G-Z)$ is at most $|E(G)|-t$, thus due to the definition of permutation graph, the number of pairs of vertices such that $i<j$ and $\Pi(i)>\Pi(j)$, where $\{\Pi(i),\Pi(j)\} \subseteq G-Z$, is at most $|E(G)|-t$. Thus, $\kt(X\lvert_{\C\setminus Z},R\lvert_{\C\setminus Z})\leq |E(G)|-t$. Clearly, $\cost(Z) \leq \budget$. In the reverse direction, let $Z$ be a solution to $\instanceJ$. Then, we claim that $Z$ is also a solution to $\instanceI$. Clearly, $w(Z)\leq W$. Note that $\kt(X,R)=|E(G)|$ by the definition of permutation graph. Since $\kt(X\lvert_{\C\setminus Z},R\lvert_{\C\setminus Z})\leq |E(G)|-t$, for at least $t$ disagreements, one of the candidate is in $Z$. Thus, $Z$ covers at least $t$ edges in $G$.   
\end{proof}

Due to Lemma~\ref{lem:wpvc_cd}, if {\sc WPVC} on a permutation graph is \nph, then \candidatedeletion is also \nph. Obtaining a polynomial-time algorithm for {\sc WPVC} on a permutation graph is also beneficial due to the following result. 

\begin{lemma}\label{lem:cd_wpvc}
    There is a polynomial time reduction from \candidatedeletion when $|\R|=1$ to {\sc WPVC} on permutation graph.
\end{lemma}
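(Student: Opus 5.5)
Given an instance $\instanceI=(\C,\R,\X,\cost,\budget,k)$ of \candidatedeletion with $\R=(R)$, I will build the permutation graph $G$ on vertex set $V(G)=\C$. Relabel the candidates by their positions in $\X$ using the same map $f(\X(i))=i$ as in the $k=0$ reduction, and let $\Pi$ be the sequence $f(R(1)),\ldots,f(R(m))$, which is a permutation of $[m]$. Let $G$ be the permutation graph realised by $\Pi$. By definition, $\{c,c'\}\in E(G)$ exactly when the pair $\{c,c'\}$ disagrees in $\X$ and $R$, so $|E(G)|=\dist{\X}{R}$. The rest of the instance is $w=\cost$, $W=\budget$ and $t=|E(G)|-k$. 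If $t\le 0$, I output a trivial yes-instance, since deleting nothing already works. All of this is clearly computable in polynomial time.

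The key observation is that deleting a set $Z\subseteq\C$ leaves exactly the disagreements among $\C\setminus Z$, and restriction preserves relative orders. Hence
\[
\kt(\X\vert_{\C\setminus Z},R\vert_{\C\setminus Z})=|E(G-Z)|=|E(G)|-(\text{number of edges covered by }Z).
\]
The forward direction follows from this: if $\cost(Z)\le\budget$ and the remaining distance is at most $k$, then $Z$ covers at least $|E(G)|-k=t$ edges and has weight at most $W$. The reverse direction is the same identity read backwards, mirroring the proof of Lemma~\ref{lem:wpvc_cd}.

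The only delicate point is making sure the graph built is genuinely a permutation graph, i.e.\ that edges correspond exactly to inversions of $\Pi$ relative to the identity order given by $\X$. This holds because, after relabeling, $\X$ is the identity permutation $1\succ\cdots\succ m$. A pair $\{i,j\}$ with $i<j$ then disagrees exactly when $j$ appears before $i$ in $\Pi$, which is precisely the inversion condition. Beyond this, the argument is direct bookkeeping.
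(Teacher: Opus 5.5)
Your proposal is correct and follows essentially the same route as the paper. You use the same relabeling $f(X(i))=i$, the same permutation $\Pi$ and permutation graph, and the same parameters $w=\mathrm{cost}$, $W=\mathrm{bgt}$, $t=|E(G)|-k$, with both directions resting on the fact that edges are exactly the pairs disagreeing between $X$ and $R$. Your explicit identity $\mathrm{KT}(X|_{\mathcal{C}\setminus Z},R|_{\mathcal{C}\setminus Z})=|E(G-Z)|$ and the $t\le 0$ remark make the bookkeeping slightly cleaner than the paper's, whose reverse direction has a typo ($|E(G)|-k$ where $|E(G)|-t$ is meant).
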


\begin{proof}
    Let $\instanceI=(\C,\R,\X,\cost,\budget,k)$ be an instance of \candidatedeletion where $\R$ has only one ranking, say $R$. We construct an instance $\instanceJ = (G, w, W,t)$ of the {\sc WPVC} problem as follows. Let $f\colon \C \rightarrow [m]$ be as function defined as follows: for each $i\in [m]$, $f(\X(i))=i$. Let $\Pi=f(R(1)),\ldots,f(R(m))$ be a permutation and $G_\Pi$ be a permutation graph corresponding to the permutation $\Pi$. Let $w=\cost$, $W=\budget$, and $t=|E(G)|-k$. Let $\instanceJ=(G_\Pi,w,W,t)$ be an instance of {\sc WPVC} where $G_\Pi$ is a permutation graph. Next, we show the equivalence between the two instances. 

    In the forward direction, let $Z$ be a solution to $\instanceI$. We claim that $Z$ is also a solution to $\instanceJ$. Clearly, $w(Z)\leq W$. Since $|E(G)|=\kt(X,R)$ and $\kt(X\lvert_{\C\setminus Z},R\lvert_{\C\setminus Z})\leq k$, for at least $\kt(X,R)-k$ edges one of the endpoint is in $Z$. In the reverse direction, suppose that $Z$ is a solution to $\instanceJ$. Then, $G-Z$ has at most $|E(G)|-k$ edges. Note that if there exists a pair of candidates $x,y$ that disagree between $R$ and $\X$, then $f(x)$ and $f(y)$ are reversed and there is an edge between then in $G_\Pi$. Thus, there can be at most $k=|E(G)|-t$ disagreements between $R$ and $\X$. This completes the proof.  
\end{proof}

Lemma \ref{lem:wpvc_cd} gives a polynomial‐time reduction from WPVC on permutation graphs to \candidatedeletion, and the fact that our construction fixes $\vert R\vert=1$ does not weaken the result, any instance of WPVC is still mapped to an equivalent instance of the \candidatedeletion. Conversely, Lemma \ref{lem:cd_wpvc} reduces that same $\vert R\vert=1$ variant back to WPVC.
Because both reductions operate on the identical restricted case, they establish a full polynomial-time equivalence, which not only yields Theorem \ref {thm:cand_del_oneR} but also transfers hardness: if WPVC on permutation graphs were shown NP-hard, the same would hold for \candidatedeletion with $\vert R \vert=1$ and hence for \candidatedeletion in general

Thus, due to Lemma~\ref{lem:wpvc_cd}, if {\sc WPVC} on a permutation graph is solvable in polynomial time, then \candidatedeletion is also solvable in polynomial time for one ranking. Thus, due to Lemma~\ref{lem:wpvc_cd} and~\ref{lem:cd_wpvc}, we have the following result. 

\begin{theorem}\label{thm:cand_del_oneR}
    \candidatedeletion can be solved in polynomial time for one ranking if and only if {\sc WPVC} can be solved in polynomial time for permutation graphs unless ${\sf P}={\sf NP}$. 
\end{theorem}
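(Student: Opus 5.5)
The plan is to derive Theorem~\ref{thm:cand_del_oneR} directly from the two reductions of Lemma~\ref{lem:wpvc_cd} and Lemma~\ref{lem:cd_wpvc}, which together show that \candidatedeletion restricted to $|\R|=1$ and {\sc WPVC} on permutation graphs are polynomial-time inter-reducible. I would prove the two directions of the equivalence separately and note that the phrase ``unless ${\sf P}={\sf NP}$'' is not needed for the core equivalence. It only guards the reading in which one side is hard.

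\textbf{If direction.} Suppose {\sc WPVC} on permutation graphs admits a polynomial-time algorithm $\mathcal{A}$. Given an instance $\instanceI=(\C,\{R\},\X,\cost,\budget,k)$, I would apply the construction of Lemma~\ref{lem:cd_wpvc}: compute $f$, the permutation $\Pi=f(R(1)),\ldots,f(R(m))$, the permutation graph $G_\Pi$, and set $w=\cost$, $W=\budget$ and $t=|E(G_\Pi)|-k$. This takes $\mathcal{O}(m^2)$ time, since $G_\Pi$ has at most $\binom{m}{2}$ edges, one per inverted pair. Running $\mathcal{A}$ on the result and returning its answer is correct by the equivalence shown in the lemma. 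One small point needs checking: if $t\le 0$ the instance is trivially \yes with $Z=\emptyset$, and this agrees with $\kt(\X,R)\le k$, so no edge case breaks the reduction.

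\textbf{Only-if direction.} Suppose \candidatedeletion with one ranking admits a polynomial-time algorithm $\mathcal{B}$. Given a {\sc WPVC} instance $(G,w,W,t)$ on a permutation graph, I would apply Lemma~\ref{lem:wpvc_cd}. The main obstacle is that this lemma assumes a realising permutation $\Pi$ is available, whereas the input is only the graph $G$. I would handle this by invoking the known polynomial-time recognition of permutation graphs, which also constructs a realiser (e.g., via modular decomposition and transitive orientation of $G$ and its complement). Vertex names must be aligned so that $\X=1\succ\cdots\succ n$ and $R=\Pi$ encode exactly the edges of $G$. With that done, the constructed instance has $|\R|=1$, and $\mathcal{B}$ decides it correctly by the lemma. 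If one prefers to avoid citing recognition, an alternative is to assume the permutation graph is presented by its realising permutation, the standard input convention.

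Combining both directions gives the stated equivalence. The ``unless ${\sf P}={\sf NP}$'' clause is then consistent: neither reduction relies on any complexity assumption, and the clause only records that hardness of one problem transfers to the other.
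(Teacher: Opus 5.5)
Your proposal is correct and follows the paper's route: the theorem comes from combining Lemma~\ref{lem:cd_wpvc} (for the ``if'' direction) with Lemma~\ref{lem:wpvc_cd} (for the ``only if'' direction), and the ``unless ${\sf P}={\sf NP}$'' clause plays no role. Your extra remarks fill in details the paper leaves implicit: you compute a realising permutation via polynomial-time recognition of permutation graphs, and you handle the trivial case $t\le 0$ (the paper's running text also cites Lemma~\ref{lem:wpvc_cd} where Lemma~\ref{lem:cd_wpvc} is meant for the ``if'' direction).
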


\section{Conclusion}\label{sec:conclusion}

In this paper, we studied the \kemenyscore problem under some well-studied bribery schemes. Unlike many other voting rules, we design polynomial time algorithms for \$-bribery, swap bribery, voter deletion, special case of candidate deletion when we want Kendall tau distance as 0, and possible Kemeny score where we extend the given partial rankings. Our main open question is to resolve the computational complexity of \candidatedeletion. We believe that it is \nph. Another research direction is to study the bribery actions for Kemeny winner. It has been studied for candidate/voter addition/deletion in the literature. It would be interesting to resolve the complexity of bribery action.

\bibliographystyle{ACM-Reference-Format} 
\bibliography{sample}


\end{document}